\newcommand{\globalcolor}[1]{\color{#1}\global\let\default@color\current@color}
\newtheorem{theorem}{Theorem}
\newtheorem{corollary}{Corollary}
\newtheorem{proposition}[theorem]{Proposition}
\newtheorem{example}{Example}
\newcommand{\tinyspace}{\mspace{1mu}}
\newcommand{\rank}{\operatorname{rank}}
\newcommand{\im}{\operatorname{im}}
\newcommand{\defeq}{\stackrel{\smash{\textnormal{\tiny def}}}{=}}
\newcommand{\ket}[1]{\lvert\tinyspace #1 \tinyspace \rangle}
\newcommand{\bra}[1]{\langle\tinyspace #1 \tinyspace \rvert}
\newcommand{\setft}[1]{\mathrm{#1}}
\newcommand{\sr}[1]{\setft{SR}(#1)}
\newcommand{\sn}[1]{\setft{SN}(#1)}
\newcommand{\complex}{\mathbb{C}}
\newcommand\V{\mathcal{V}}
\renewcommand\H{\mathcal{H}}
\renewcommand\S{\mathcal{S}}
\newcommand{\w}{{\wedge}}
\newcommand\range{\setft{range}}
\newcommand\spn{\setft{span}}
\newcommand{\lip}{\langle}
\newcommand{\rip}{\rangle}
\newcommand{\ketbra}[2]{| \tinyspace #1 \tinyspace \rip\lip \tinyspace #2 \tinyspace |}
\begin{document}

\title{A Complete Hierarchy of Linear Systems for Certifying Quantum Entanglement of Subspaces}

\author{Nathaniel~Johnston}
\email{njohnston@mta.ca}
\affiliation{Department of Mathematics and Computer Science, Mount Allison University, Sackville, New Brunswick, Canada\\ Department of Mathematics \& Statistics, University of Guelph, Guelph, ON, Canada}

\author{Benjamin Lovitz}
\email{benjamin.lovitz@gmail.com}
\affiliation{Department of Mathematics, Northeastern University, Boston, Massachusetts, USA}

\author{Aravindan Vijayaraghavan}
\email{aravindv@northwestern.edu}
\affiliation{Department of Computer Science, Northwestern University, Evanston, Illinois, USA}

\begin{abstract}
    We introduce a hierarchy of linear systems for showing that a given subspace of pure quantum states is entangled (i.e., contains no product states). This hierarchy outperforms known methods already at the first level, and it is complete in the sense that every entangled subspace is shown to be so at some finite level of the hierarchy. It generalizes straightforwardly to the case of higher Schmidt rank, as well as the multipartite cases of completely and genuinely entangled subspaces. These hierarchies work extremely well in practice even in very large quantum systems, as they can be implemented via elementary linear algebra techniques rather than the semidefinite programming techniques that are required by previously-known hierarchies.
\end{abstract}

\pacs{03.65.Ud, 03.67.Bg}

\maketitle


\color{black}
\section{Introduction}

Quantum entanglement is one of the central features of modern physics, and the problem of determining when entanglement is present in a quantum system is one of its most active research areas \cite{GT09,HHH09}. Of particular interest in this area is the problem of determining whether or not a given subspace is entangled. That is, the problem of determining whether or not every pure state in the subspace is entangled (i.e., not a product state) \cite{Par04,Bha06}.

In the bipartite setting of two quantum systems, one of the standard uses of certifying entanglement in subspaces is that any mixed quantum state supported on an entangled subspace is necessarily entangled \cite{Hor97,BDMSST99}, but numerous other applications have appeared in recent years. For example, entangled subspaces can be used to construct entanglement witnesses \cite{ATL11,CS14} and to perform quantum error correction \cite{GW07,HG20}. Further applications of this problem and its robust variants include determining the performance of QMA(2) protocols, computing the geometric measure of entanglement, and determining the ground-state energy of mean-field Hamiltonians as examples~\cite{HM10}. (For yet more applications, the reference~\cite{HM10} contains a compendium of 21 equivalent or closely related problems in quantum information and computer science!)


In the multipartite setting of three or more quantum systems, there are different notions of entanglement of a subspace. A completely entangled subspace in one containing no product states \cite{BDMSST99}, while a genuinely entangled subspace is one containing no states that are product across any bipartition (genuine entanglement is a stricter requirement than complete entanglement) \cite{MR18,AHB19}. Completely entangled subspaces are useful for locally discriminating pure quantum states \cite{Wal08,LJ21}, while genuinely entangled subspaces have been shown to have applications in quantum cryptography \cite{SS19}.

Determining whether or not a subspace is entangled is a difficult problem (see \cite{BUSS1999572} or \cite[Corollary~14]{HM10}, for example). To certify that a subspace is not entangled, it suffices to present a product vector in that subspace, but it is hard to actually find such a product vector in the first place. In the other direction, it is not known how to efficiently show that a given subspace \emph{is} entangled, even with the help of a certificate. To date, the only practical methods known for solving this problem work in very limited situations, such as when the subspace's dimension is smaller than the local dimensions \cite{LPS06,GR08,DRA21}, or when the dimensions are small enough that separability hierarchies based on semidefinite programming can be employed \cite{DPS04,NOP09,HNW17}.

We solve this problem by presenting a hierarchy of linear systems that can be used to certify that a given subspace is entangled. Our hierarchy is distinct from other hierarchies commonly used in quantum information theory; known semidefinite programming hierarchies are based on symmetric extensions and/or the sum of squares hierarchies \cite{DPS04}, while our hierarchy is based on Hilbert's projective Nullstellensatz from algebraic geometry \cite{harris2013algebraic}. As a result, our hierarchy terminates (i.e., detects every entangled subspace) at a finite level that depends only on the local dimensions; something that is known not to be possible for separability-based hierarchies like symmetric extensions \cite{Faw21}.

Our hierarchy works extremely well in practice, with even its first level being able to certify entanglement in subspaces that are much larger (quadratic in the local dimensions) than can be handled by other known techniques. The hierarchy also generalizes straightforwardly to $r$-entangled subspaces (i.e., subspaces in which every pure state has large Schmidt rank \cite{CMW08}), as well as to multipartite completely entangled subspaces and genuinely entangled subspaces. It also provides, as an immediate corollary, a new separability criterion that works well at detecting entanglement in low-rank mixed quantum states; even ones whose entanglement cannot be detected by the partial transpose \cite{Per96} (i.e., bound entangled states \cite{HHH98}). We provide MATLAB code that implements all of our methods \footnote{See \url{http://www.njohnston.ca/publications/entanglement-of-subspaces/} or the supplementary material of the arXiv version of this paper for MATLAB code\label{ref:suppmat}}.

\section{The First Level of the Hierarchy}

We use $\H_A$ and $\H_B$ to denote finite-dimensional complex Hilbert spaces (which can be thought of as $\complex^{d_A}$ and $\complex^{d_B}$) of dimension $d_A$ and $d_B$, respectively. A pure state $\ket{x} \in \H_A \otimes \H_B$ (i.e., a unit column vector) is said to be a \emph{product state} if it can be written in the form $\ket{x} = \ket{v} \otimes \ket{w}$ for some $\ket{v} \in \H_A$ and $\ket{w} \in \H_B$, and it is said to have \emph{Schmidt rank} $r$ (denoted by $\sr{\ket{x}} = r$) if it can be written as a linear combination (i.e., superposition) of $r$ product states but not fewer. A subspace of $\H_A \otimes \H_B$ is called \emph{$r$-entangled} (or just \emph{entangled} if $r = 1$) if every pure state in it has Schmidt rank $r+1$ or larger.

The starting point of our hierarchy for certifying that a given subspace is $r$-entangled is the observation that, for $\ket{x} \in \H_A \otimes \H_B$, we have $\sr{\ket{x}} \leq r$ if and only if
\begin{align}
    \big(P^{\wedge}_{A,r+1} \otimes P^{\wedge}_{B,r+1}\big)\big(\ket{x}^{\otimes (r+1)}\big) = 0,
\end{align}
where $P^{\wedge}_{A,r+1}$ is the projection onto the antisymmetric subspace of $\H_{A}^{\otimes(r+1)}$ (and similarly for the ``$B$'' subscripts). This is a classical result in algebraic geometry (see e.g., \cite{landsberg2012tensors}), and it has been used in a variety of contexts. For example, it appeared in a tensor decomposition algorithm in~\cite{cardoso1991super}, and a similar observation was made in \cite{TMG15}, where antisymmetric projections were used to create a semidefinite programming hierarchy for computing the Schmidt number of a mixed state.

For brevity, we define
\begin{align}\label{eq:phi_r_level1}
    \Phi_{r}^{1} \defeq P^{\wedge}_{A,r+1} \otimes P^{\wedge}_{B,r+1},
\end{align}
and for completeness, we formally state and prove the observation that we just made about $\Phi_r^1$:

\begin{proposition}\label{prop:phi_rk_rank_det}
    Suppose $\ket{x} \in \H_A \otimes \H_B$ and let $\Phi_r^1$ be the linear map from Equation~\eqref{eq:phi_r_level1}. Then $\sr{\ket{x}} \leq r$ if and only if $\Phi_r^1\big(\ket{x}^{\otimes (r+1)}\big) = 0$.
\end{proposition}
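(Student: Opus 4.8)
The plan is to prove both directions by relating the vanishing of $\Phi_r^1(\ket{x}^{\otimes(r+1)})$ to a rank condition on the matrix obtained from $\ket{x}$ under the natural isomorphism $\H_A \otimes \H_B \cong \Lin(\H_B^*, \H_A)$. Write $\ket{x} = \sum_{i,j} X_{ij}\, \ket{i}\otimes\ket{j}$ and let $X = (X_{ij})$ be the corresponding $d_A \times d_B$ matrix; recall that $\sr{\ket{x}} = \rank(X)$, since a Schmidt decomposition of $\ket{x}$ is exactly a rank decomposition of $X$. So the claim becomes: $\rank(X) \le r$ if and only if $(P^\wedge_{A,r+1}\otimes P^\wedge_{B,r+1})(\ket{x}^{\otimes(r+1)}) = 0$.

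The key computation is to expand $\ket{x}^{\otimes(r+1)}$ in the standard basis and apply the two antisymmetric projections. Since $P^\wedge_{A,r+1}$ antisymmetrizes the $r+1$ ``$A$'' tensor factors and $P^\wedge_{B,r+1}$ antisymmetrizes the ``$B$'' factors, applying $\Phi_r^1$ to $\sum_{i_1,\dots,i_{r+1}}\sum_{j_1,\dots,j_{r+1}} X_{i_1 j_1}\cdots X_{i_{r+1} j_{r+1}}\, \ket{i_1\cdots i_{r+1}}\otimes\ket{j_1\cdots j_{r+1}}$ produces, up to a nonzero normalization factor, a vector whose coefficients are indexed by an increasing tuple $i_1 < \cdots < i_{r+1}$ of row indices and an increasing tuple $j_1 < \cdots < j_{r+1}$ of column indices, with coefficient equal (again up to a nonzero constant) to
\begin{equation}
    \det\!\big(X[\{i_1,\dots,i_{r+1}\},\{j_1,\dots,j_{r+1}\}]\big),
\end{equation}
the $(r+1)\times(r+1)$ minor of $X$ on those rows and columns. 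This is just the Cauchy--Binet / exterior-algebra identity $\Lambda^{r+1}(X)$ read off in coordinates; concretely, antisymmetrizing $\bigotimes_k(\sum_{i_k} X_{i_k j_k}\ket{i_k})$ over the $j$-indices and the $i$-indices turns the product of entries into a determinant. Hence $\Phi_r^1(\ket{x}^{\otimes(r+1)}) = 0$ if and only if every $(r+1)\times(r+1)$ minor of $X$ vanishes, which by the standard characterization of matrix rank holds if and only if $\rank(X) \le r$, i.e.\ $\sr{\ket{x}} \le r$.

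I expect the main obstacle to be purely bookkeeping: carefully tracking the combinatorial prefactors (the $1/(r+1)!$ from each projector, and the number of permutations contributing to a given increasing-tuple coefficient) to confirm they are nonzero and hence do not affect the vanishing condition, and stating cleanly why the antisymmetrized basis vectors indexed by strictly increasing tuples are linearly independent (so that $\Phi_r^1(\ket{x}^{\otimes(r+1)})$ vanishes precisely when all these determinant coefficients do). One clean way to avoid index-wrangling is to phrase the argument invariantly: $\Phi_r^1(\ket{x}^{\otimes(r+1)})$, viewed in $\Lambda^{r+1}\H_A \otimes \Lambda^{r+1}\H_B$, is exactly the image of the decomposable element $(x\wedge)^{\otimes\ldots}$ — more precisely it is $\big(\bigwedge\nolimits^{r+1}\big)$ applied to the linear map $X$ — and a linear map has rank $\le r$ iff its $(r+1)$-st exterior power is zero. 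Either route gives the result; the coordinate computation is more elementary and self-contained, so that is the one I would write out.
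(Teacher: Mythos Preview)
Your argument is correct and takes a genuinely different route from the paper. The paper works with the Schmidt decomposition $\ket{x}=\sum_{j=1}^{s}\lambda_j\ket{v_j}\otimes\ket{w_j}$: when $s\le r$ each term in the expansion of $\Phi_r^1(\ket{x}^{\otimes(r+1)})$ has a repeated $\ket{v_j}$ and is annihilated by $P^\wedge_{A,r+1}$; when $s>r$ the paper exhibits an explicit bra (built from the orthonormal Schmidt vectors) whose overlap with $\Phi_r^1(\ket{x}^{\otimes(r+1)})$ is strictly positive. You instead pass to the matrix $X$ and identify the coordinates of $\Phi_r^1(\ket{x}^{\otimes(r+1)})$ in the wedge basis with the $(r+1)\times(r+1)$ minors of $X$, reducing the statement to the standard fact that $\rank(X)\le r$ iff all such minors vanish (equivalently, $\Lambda^{r+1}X=0$). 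Your approach has the advantage of giving an explicit description of $\Phi_r^1(\ket{x}^{\otimes(r+1)})$ as the vector of minors, which ties the construction transparently to the determinantal variety of rank-$\le r$ matrices; the paper's Schmidt-vector argument is basis-free and slightly more self-contained for readers not fluent in exterior algebra. The only cosmetic point to clean up is the placeholder ``$(x\wedge)^{\otimes\ldots}$'' in your invariant paragraph; just say that under the identification $\Lambda^{r+1}\H_A\otimes\Lambda^{r+1}\H_B\cong\Lin(\Lambda^{r+1}\H_B^*,\Lambda^{r+1}\H_A)$ the vector $\Phi_r^1(\ket{x}^{\otimes(r+1)})$ corresponds (up to a nonzero scalar) to $\Lambda^{r+1}X$.
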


\begin{proof}
    Write $\ket{x}$ in its Schmidt decomposition $\ket{x} = \sum_{j=1}^s \lambda_j \ket{v_j} \otimes \ket{w_j}$ with $s = \sr{\ket{x}}$. Then
    \begin{widetext}
        \begin{align*}
            \Phi_r^1\big(\ket{x}^{\otimes (r+1)}\big) & = \big(P^{\wedge}_{A,r+1} \otimes P^{\wedge}_{B,r+1}\big)\big(\ket{x}^{\otimes (r+1)}\big) \\
            & = \sum_{j_1,\ldots,j_{r+1}=1}^s \lambda_{j_1}\cdots\lambda_{j_{r+1}}P^{\wedge}_{A,r+1}\Big(\ket{v_{j_1}} \otimes \cdots \otimes \ket{v_{j_{r+1}}}\Big) \otimes P^{\wedge}_{B,r+1}\Big(\ket{w_{j_1}} \otimes \cdots \otimes \ket{w_{j_{r+1}}}\Big).
        \end{align*}
    \end{widetext}
    If $s \leq r$ then $\{\ket{v_{j_1}},\ldots,\ket{v_{j_{r+1}}}\}$ is a set containing $r$ or fewer members, so $P^{\wedge}_{A,t}\big(\ket{v_{j_1}} \otimes \cdots \otimes \ket{v_{j_{r+1}}}\big) = 0$, so $\Phi_r^1\big(\ket{x}^{\otimes (r+1)}\big)$ = 0.
    
    On the other hand, if $s > r$ then for any $1 \leq \widetilde{j_1} < \cdots < \widetilde{j_{r+1}} \leq s$ we have
    \begin{align*}
        \big(\bra{v_{\widetilde{j_1}}} \otimes \cdots \otimes \bra{v_{\widetilde{j_{r+1}}}}\big) & P^{\wedge}_{A,r+1}\big(\ket{v_{j_1}} \otimes \cdots \otimes \ket{v_{j_{r+1}}}\big) \\
        = \big(\bra{w_{\widetilde{j_1}}} \otimes \cdots \otimes \bra{w_{\widetilde{j_{r+1}}}}\big) & P^{\wedge}_{B,r+1}\big(\ket{w_{j_1}} \otimes \cdots \otimes \ket{w_{j_{r+1}}}\big),
    \end{align*}
    and this quantity is non-zero if and only if $\big\{\widetilde{j_1},\ldots,\widetilde{j_{r+1}}\big\} = \{j_1,\ldots,j_{r+1}\}$. It follows that
    \[
        \Big(\bra{v_{\widetilde{j_1}}} \otimes \cdots \otimes \bra{v_{\widetilde{j_{r+1}}}} \otimes \bra{w_{\widetilde{j_1}}} \otimes \cdots \otimes \bra{w_{\widetilde{j_{r+1}}}}\Big)\Phi_r^1\big(\ket{x}^{\otimes (r+1)}\big)
    \]
    is non-zero (in fact, strictly positive). In particular, this means that $\Phi_r^1\big(\ket{x}^{\otimes (r+1)}\big) \neq 0$, completing the proof.
\end{proof}

The superscript ``$1$'' in the notation $\Phi_r^1$ refers to the fact that this map gives us the first level of our hierarchy for certifying that a subspace of $\H_A \otimes \H_B$ is $r$-entangled:

\begin{theorem}\label{thm:detector_level1}
    Let $\S \subseteq \H_A \otimes \H_B$ be a subspace with basis $\{\ket{x_1},\ldots,\ket{x_{d_S}}\}$. If the set
    \begin{align}\label{eq:li_level1}
        \left\{\Phi_{r}^{1}\big(\ket{x_{j_1}} \otimes \dots \otimes \ket{x_{j_{r+1}}}\big) : 1 \leq j_1 \leq \cdots \leq j_{r+1} \leq d_S \right\}
    \end{align}
    is linearly independent then $\S$ is $r$-entangled.
\end{theorem}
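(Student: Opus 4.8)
The plan is to argue by contraposition: I would assume that $\S$ is \emph{not} $r$-entangled and produce a linear dependence among the vectors in the set \eqref{eq:li_level1}. So suppose some $\ket{x} \in \S$ has $\sr{\ket{x}} \leq r$. Writing $\ket{x} = \sum_{j=1}^{d_S} c_j \ket{x_j}$ in the given basis, expand
\begin{align*}
    \ket{x}^{\otimes(r+1)} = \sum_{j_1,\ldots,j_{r+1}=1}^{d_S} c_{j_1}\cdots c_{j_{r+1}} \, \ket{x_{j_1}} \otimes \cdots \otimes \ket{x_{j_{r+1}}}.
\end{align*}
By Proposition~\ref{prop:phi_rk_rank_det}, applying $\Phi_r^1$ to this gives $0$. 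The key point is that $\Phi_r^1$ is symmetric under permutation of the $r+1$ tensor factors (each of $P^\wedge_{A,r+1}$ and $P^\wedge_{B,r+1}$ commutes with the permutation action after accounting for signs, and the signs from the $A$ and $B$ sides cancel), so $\Phi_r^1(\ket{x_{j_1}} \otimes \cdots \otimes \ket{x_{j_{r+1}}})$ depends only on the multiset $\{j_1,\ldots,j_{r+1}\}$. Hence I can collect terms and rewrite $\Phi_r^1(\ket{x}^{\otimes(r+1)}) = 0$ as a linear combination, indexed by nondecreasing tuples $1 \leq j_1 \leq \cdots \leq j_{r+1} \leq d_S$, of exactly the vectors appearing in the set \eqref{eq:li_level1}, with coefficient (a nonzero multinomial factor times) $c_{j_1}\cdots c_{j_{r+1}}$ on the term for the tuple $(j_1,\ldots,j_{r+1})$.

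Since $\ket{x} \neq 0$, at least one $c_j$ is nonzero, so at least one coefficient $c_{j_1}\cdots c_{j_{r+1}}$ (e.g.\ taking all indices equal to that $j$) is nonzero. This exhibits a nontrivial linear dependence among the set \eqref{eq:li_level1}, so that set is linearly dependent. Contrapositively, if the set is linearly independent then every $\ket{x} \in \S$ has $\sr{\ket{x}} \geq r+1$, i.e.\ $\S$ is $r$-entangled.

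The main thing to get right is the bookkeeping in the symmetrization step: one must verify carefully that $\Phi_r^1$ genuinely factors through the symmetric power of $\S^{\otimes(r+1)}$ — that is, that the value $\Phi_r^1(\ket{x_{j_1}} \otimes \cdots \otimes \ket{x_{j_{r+1}}})$ is unchanged under permuting the indices — so that regrouping the $d_S^{r+1}$-term sum into a sum over $\binom{d_S + r}{r+1}$ multisets is legitimate and the combinatorial (multinomial) coefficients are strictly positive, hence do not spuriously vanish. Everything else is a direct substitution using Proposition~\ref{prop:phi_rk_rank_det}; no further case analysis is needed.
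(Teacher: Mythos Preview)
Your proof is correct and is essentially the same argument the paper uses: the paper defers Theorem~\ref{thm:detector_level1} to the $k=1$ case of Theorem~\ref{thm:detector_levelk}, whose forward direction amounts precisely to your observation that $\Phi_r^1$ is permutation-invariant (so it factors through $P^{\vee}_{AB,r+1}$), hence $\Phi_r^1(\ket{x}^{\otimes(r+1)})=0$ for some $\ket{x}\in\S$ yields a nontrivial relation among the vectors in~\eqref{eq:li_level1}. The paper wraps this in the Nullstellensatz framework of Theorem~\ref{thm:nullstellensatz} only because it also needs the converse for Theorem~\ref{thm:detector_levelk}; for Theorem~\ref{thm:detector_level1} alone, your direct contrapositive is exactly the content of the ``$2\Rightarrow 1$'' step there.
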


We do not yet prove this theorem, as it is a special case of the upcoming Theorem~\ref{thm:detector_levelk}, which we prove in the appendix.

Since the set~\eqref{eq:li_level1} consists of $\binom{d_S+r}{r+1}$ vectors, each living inside the $\binom{d_A}{r+1}\binom{d_B}{r+1}$-dimensional range of $\Phi_{r}^{1}$, Theorem~\ref{thm:detector_level1} can be implemented by determining whether or not a homogeneous $\binom{d_A}{r+1}\binom{d_B}{r+1} \times \binom{d_S+r}{r+1}$ linear system has a non-zero solution. Despite just being the first level of the hierarchy, this linear system can already certify $r$-entanglement of subspaces that are significantly larger than the local dimensions $d_A$ and $d_B$; a fact that we now illustrate with several examples and an additional proposition.

We say that a property holds for a \emph{generic $d_S$-dimensional subspace of $\H_A \otimes \H_B$} if it holds with probability one for a Haar-random $d_S$-dimensional subspace of $\H_A \otimes \H_B$ (see e.g.,~\cite[Definition~2.2]{Wal08} for the definition of a Haar-random subspace) \footnote{All of the genericity statements made here also hold more generally under the algebraic-geometric definition of generic presented in~\cite{LJ21}}. It is known that the maximum dimension of an $r$-entangled subspace is $(d_A-r)(d_B-r)$~\cite{CMW08}. For $r=1$, the following proposition shows that the first level of the hierarchy already certifies entanglement of a generic subspace of dimension up to a constant multiple of this maximum. This is surprising, given that the best-known algorithm in the worst case for determining whether a subspace is entangled or not runs in time exponential in $\sqrt{d_A}$ when $d_A=d_B$~\cite{barak2017quantum}.

\begin{proposition}\label{prop:generic}
    In the notation of Theorem~\ref{thm:detector_level1}, if $r = 1$ then the set \eqref{eq:li_level1} is linearly independent for a generic subspace of dimension $d_S < (d_A-1) (d_B-1) / 4$. 
\end{proposition}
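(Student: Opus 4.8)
The plan is to reduce the statement to the construction of a single favorable subspace and then build one by hand. First I would record that linear independence of the set~\eqref{eq:li_level1} does not depend on the choice of basis of $\S$: swapping the two tensor factors of $(\H_A\otimes\H_B)^{\otimes 2}$ acts as the identity on the range of $\Phi_1^1$ (each of $P^{\wedge}_{A,2}$, $P^{\wedge}_{B,2}$ contributes a sign $-1$), so $\Phi_1^1(\ket x\otimes\ket y)=\Phi_1^1(\ket y\otimes\ket x)$, the set~\eqref{eq:li_level1} always spans $\Phi_1^1\big(\sym{\S\otimes\S}\big)$, and it is linearly independent exactly when $\dim\Phi_1^1\big(\sym{\S\otimes\S}\big)=\binom{d_S+1}{2}$. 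This is a Zariski-open condition on the Grassmannian of $d_S$-dimensional subspaces, and it is monotone under passing to subspaces (extend a basis of the smaller subspace to one of the larger). Since the set of subspaces on which it fails is then a proper Zariski-closed set, it has measure zero, so a Haar-random subspace satisfies it with probability one as soon as some subspace does. Hence it suffices to exhibit one subspace $\S_0$ with~\eqref{eq:li_level1} linearly independent and $\dim\S_0\ge\floor{d_A/2}\floor{d_B/2}$: every integer $d_S<(d_A-1)(d_B-1)/4$ satisfies $d_S\le\floor{d_A/2}\floor{d_B/2}-1$, so the claim then follows by taking a $d_S$-dimensional subspace of $\S_0$.

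Next I would construct $\S_0$ as a ``grid of singlets.'' Fix decompositions $\H_A=V_1\oplus\dots\oplus V_{\floor{d_A/2}}$ and $\H_B=W_1\oplus\dots\oplus W_{\floor{d_B/2}}$ into two-dimensional subspaces (discarding one coordinate if a local dimension is odd), and for each pair $(s,t)$ let $\ket{z_{s,t}}\in V_s\otimes W_t$ be a fixed Schmidt-rank-$2$ vector --- concretely the one corresponding to $\I_2$ under an identification $V_s\otimes W_t\cong\complex^2\otimes\complex^2$. Set $\S_0=\spn\{\ket{z_{s,t}}\}$, so $\dim\S_0=\floor{d_A/2}\floor{d_B/2}$; equivalently, under $\H_A\cong\complex^{\floor{d_A/2}}\otimes\complex^2$ and $\H_B\cong\complex^{\floor{d_B/2}}\otimes\complex^2$, this is the space of all operators $\Lambda\otimes\I_2$, every nonzero element of which has Schmidt rank $2\,\rank\Lambda\ge 2$ and is therefore entangled --- as it must be, since a subspace containing a product vector always has~\eqref{eq:li_level1} linearly dependent by Proposition~\ref{prop:phi_rk_rank_det}.

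For the verification I would exploit the block structure of the range $\Lambda^2\H_A\otimes\Lambda^2\H_B$ of $\Phi_1^1$ (writing $\Lambda^2\H_A$ for the antisymmetric subspace of $\H_A^{\otimes2}$), using $\Lambda^2\H_A=\bigoplus_s\Lambda^2 V_s\oplus\bigoplus_{s<s'}V_s\wedge V_{s'}$ and likewise for $B$. Because $\ket{z_{s,t}}\in V_s\otimes W_t$, the vector $\Phi_1^1\big(\ket{z_{s,t}}\otimes\ket{z_{s',t'}}\big)$ lies in the summand determined by the ``shape'' $(\{s,s'\},\{t,t'\})$, where a repeated index contributes a $\Lambda^2 V_s$ (resp.\ $\Lambda^2 W_t$) factor and two distinct indices a $V_s\wedge V_{s'}$ (resp.\ $W_t\wedge W_{t'}$) factor. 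Products of different shapes land in complementary direct summands, so it is enough to check linear independence shape by shape. A shape with $s=s'$ or $t=t'$ contributes exactly one product, which an explicit $2\times2$-minor computation shows to be nonzero (for $s=s'$ and $t=t'$ this is Proposition~\ref{prop:phi_rk_rank_det}). A shape $(\{s_1<s_2\},\{t_1<t_2\})$ contributes exactly two products, $\Phi_1^1\big(\ket{z_{s_1,t_1}}\otimes\ket{z_{s_2,t_2}}\big)$ and $\Phi_1^1\big(\ket{z_{s_1,t_2}}\otimes\ket{z_{s_2,t_1}}\big)$; listing their (at most four) nonzero coordinates in the $16$-dimensional summand $(V_{s_1}\wedge V_{s_2})\otimes(W_{t_1}\wedge W_{t_2})$ shows that each of the two has a coordinate on which the other vanishes, forcing linear independence. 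Assembling the shapes gives that~\eqref{eq:li_level1} is linearly independent for $\S_0$.

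The step I expect to require the most care is this last one: organizing the $2\times2$-minor bookkeeping so that the ``each product owns a coordinate the other lacks'' claim is manifest, and likewise checking nonvanishing in the single-product shapes. The underlying matrices are $2\times2$ and $4\times4$, so these are finite checks, but they must be laid out cleanly. Everything else --- the genericity reduction, basis-independence via swap-invariance of $\Phi_1^1$, the fact that distinct shapes occupy complementary summands, and the elementary inequality $\floor{d_A/2}\floor{d_B/2}\ge(d_A-1)(d_B-1)/4$ that makes $\S_0$ large enough --- is routine.
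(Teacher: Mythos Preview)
Your argument is correct, but it follows a genuinely different route from the paper's proof. The paper never constructs an explicit witness subspace. Instead it works directly with generic basis vectors $\ket{x_1},\ldots,\ket{x_{d_S}}$ and performs a pure dimension count: for each symmetric product $P^{\vee}_{AB,2}(\ket{x_{j_1}}\otimes\ket{x_{j_2}})$, it bounds the span of $\ker(\Phi_1^1)\cap\range(P^{\vee}_{AB,2})$ together with all the \emph{other} symmetric products by embedding the latter into $\bigcup_{i\notin\{j_1,j_2\}}\calU_i$ where $\calU_i=\{P^{\vee}_{AB,2}(\ket{x_i}\otimes\ket{z}):\ket{z}\in\H_A\otimes\H_B\}$. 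The resulting subspace has dimension at most $\binom{d_Ad_B+1}{2}-\binom{d_A}{2}\binom{d_B}{2}+d_S\,d_Ad_B$, which is strictly smaller than $\dim\range(P^{\vee}_{AB,2})=\binom{d_Ad_B+1}{2}$ precisely when $d_S\,d_Ad_B<\binom{d_A}{2}\binom{d_B}{2}$, i.e., when $d_S<(d_A-1)(d_B-1)/4$. A generic symmetric tensor then avoids this proper subspace, giving the claim.

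Your approach trades this dimension count for the standard ``one witness suffices'' reduction plus the explicit grid-of-singlets construction. The paper's proof is shorter and uses the hypothesis on $d_S$ in a single transparent inequality; yours requires the shape-by-shape block analysis, but in exchange it is fully constructive, and the witness $\S_0$ you build actually has dimension $\lfloor d_A/2\rfloor\lfloor d_B/2\rfloor\geq (d_A-1)(d_B-1)/4$, so your method in fact certifies generic subspaces up to a slightly larger threshold when $d_A$ or $d_B$ is even. Both proofs ultimately appeal to the same Zariski-open/measure-zero principle (the paper invokes it at the end of its proof of Theorem~\ref{thm:detector_levelk}), just at different points in the argument.
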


We defer the proof of this proposition to the appendix.

Proposition~\ref{prop:generic} gives a sufficient condition for a generic subspace of dimension $d_S$ to be certified by the first level of our hierarchy. In the opposite direction, by just considering the size of the linear system that Theorem~\ref{thm:detector_level1} describes, we know that
\begin{align}\label{eq:dim_bound_level1}
    \binom{d_S+r}{r+1} \leq \binom{d_A}{r+1}\binom{d_B}{r+1}
\end{align}
is necessary.

The following pair of examples show that this bound is in fact tight in many cases, i.e., the first level of our hierarchy certifies entanglement of any subspace $\S$ for which $d_S$ satisfies Inequality~\eqref{eq:dim_bound_level1}.

\begin{example}\label{exx:dim4subspace}
    When $d_A = d_B = 4$ and $r = 1$, Inequality~\eqref{eq:dim_bound_level1} holds exactly when $d_S \leq 8$, so the largest subspace that we can hope to certify is entangled via Theorem~\ref{thm:detector_level1} has dimension $8$. It indeed works all the way up to dimension $8$, getting quite close to the maximum dimension of entangled subspaces of $(d_A-r)(d_B-r) = 9$ in this case.
    
    For example, following the construction of large entangled subspaces from \cite{CMW08}, consider the subspace
    \[
        \S = \mathrm{span}\big\{ \ket{x_1}, \ldots, \ket{x_8} \big\} \subset \H_A \otimes \H_B,
    \]
    where (here we omit normalization factors for brevity, and we use $\ket{j}$ to denote the $j$-th standard basis vector of $\H_A$ and $\H_B$)
    \begin{align*}
        \ket{x_1} & = \ket{0} \otimes \ket{0} + \ket{1} \otimes \ket{1} + \ket{2} \otimes \ket{2} + \ket{3} \otimes \ket{3}, \\
        \ket{x_2} & = \ket{0} \otimes \ket{0} - \ket{1} \otimes \ket{1} + \ket{2} \otimes \ket{2} - \ket{3} \otimes \ket{3}, \\
        \ket{x_3} & = \ket{0} \otimes \ket{1} + \ket{1} \otimes \ket{2} + \ket{2} \otimes \ket{3}, \\
        \ket{x_4} & = \ket{1} \otimes \ket{0} + \ket{2} \otimes \ket{1} + \ket{3} \otimes \ket{2}, \\
        \ket{x_5} & = \ket{0} \otimes \ket{1} + 2\ket{1} \otimes \ket{2} + 3\ket{2} \otimes \ket{3}, \\
        \ket{x_6} & = \ket{1} \otimes \ket{0} + 2\ket{2} \otimes \ket{1} + 3\ket{3} \otimes \ket{2}, \\
        \ket{x_7} & = \ket{0} \otimes \ket{2} + \ket{1} \otimes \ket{3}, \quad \text{and} \\
        \ket{x_8} & = \ket{2} \otimes \ket{0} + \ket{3} \otimes \ket{1}.
    \end{align*}
    To show that $\S$ is entangled, it suffices to solve the $\binom{d_A}{r+1}\binom{d_B}{r+1} \times \binom{d_S+r}{r+1} = 36 \times 36$ linear system described by Theorem~\ref{thm:detector_level1}. Doing so reveals that the set~\eqref{eq:li_level1} is indeed linearly independent, so $\S$ is entangled.
    
    Similarly, we generated $10^5$ Haar-random $8$-dimensional subspaces of $\H_A \otimes \H_B$, and Theorem~\ref{thm:detector_level1} detected their entanglement every single time (we will show in the upcoming Theorem~\ref{thm:detector_levelk} that this behavior is expected).
%
%
\end{example}

\begin{example}\label{exx:dim4subspacer2}
    When $d_A = d_B = 4$ and $r = 2$, Inequality~\eqref{eq:dim_bound_level1} holds exactly when $d_S \leq 3$, so the largest subspace that we can hope to certify is $2$-entangled via Theorem~\ref{thm:detector_level1} has dimension $3$. Many subspaces of this dimension are indeed certified, such as the span of the states $\ket{x_1}$, $\ket{x_3}$, and $\ket{x_4}$ from Example~\ref{exx:dim4subspace}. Performing this certification simply requires us to solve a $\binom{d_A}{r+1}\binom{d_B}{r+1} \times \binom{d_S+r}{r+1} = 16 \times 10$ linear system.
    
    Similarly, we generated $10^5$ Haar-random $3$-dimensional subspaces of $\H_A \otimes \H_B$, and Theorem~\ref{thm:detector_level1} detected their $2$-entanglement every single time.
\end{example}

Table~\ref{tab:level1_numerics} provides some numerics that show the maximum dimension of an $r$-entangled subspace that can be certified by Theorem~\ref{thm:detector_level1} (which, in all cases displayed, is equal to the largest value of $d_S$ for which Inequality~\eqref{eq:dim_bound_level1} holds) in various local dimensions, as well as the amount of time that it takes our code to certify such a subspace on a standard desktop computer. The subspaces that we checked to obtain these timings have a form that is similar to that of the subspace from Example~\ref{exx:dim4subspace}.

\begin{table}[!htb]
    \begin{center}
        \begin{tabular}{ @{ \ }c@{ \ \ } @{ \ \ }c@{ \ \ \ }c@{ \quad \quad }c@{ \ \ \ }c@{ \ } }
            \toprule
            \multicolumn{1}{c}{} & \multicolumn{2}{@{}c}{$r = 1$} & \multicolumn{2}{@{}c}{$r = 2$} \\ \cmidrule(r){2-3}\cmidrule{4-5}
            $d_A = d_B$ & max. $d_S$ & time & max. $d_S$ & time \\ \midrule
            $3$ & $3$ & $0.01$ s & $1$ & $0.03$ s \\
            $4$ & $8$ & $0.03$ s & $3$ & $0.19$ s \\
            $5$ & $13$ & $0.08$ s & $7$ & $0.65$ s \\
            $6$ & $20$ & $0.20$ s & $12$ & $2.38$ s \\
            $7$ & $29$ & $0.49$ s & $18$ & $8.17$ s \\
            $8$ & $39$ & $1.06$ s & $25$ & $27.46$ s \\
            $9$ & $50$ & $2.24$ s & $33$ & $1.78$ min \\
            $10$ & $63$ & $5.56$ s & $43$ & $14.62$ min \\ \bottomrule
        \end{tabular}
        \caption{The maximum dimension $d_S$ of a subspace of $\H_A \otimes \H_B$ that can be certified to be $r$-entangled by the first level of the hierarchy (i.e., Theorem~\ref{thm:detector_level1}), as well as the time required to do the certification, for small values of $d_A = d_B$ and $r$. In all cases shown here, the maximum dimension is the largest $d_S$ for which Inequality~\eqref{eq:dim_bound_level1} holds.}\label{tab:level1_numerics}
    \end{center}
%
\end{table}

\section{The Rest of the Hierarchy}

For an integer $k \geq 1$, the $k$-th level of our hierarchy is based on the following linear map acting on $(\H_A \otimes \H_B)^{\otimes (r+k)}$:
\begin{align}\label{eq:phi_rk_defn}
    \Phi_{r}^{k} \defeq \big(P^{\wedge}_{A,r+1} \otimes P^{\wedge}_{B,r+1} \otimes I_{AB,k-1}\big)P^{\vee}_{AB,r+k},
\end{align}
where $I_{AB,k-1}$ is the identity on $(\H_A \otimes \H_B)^{\otimes (k-1)}$ and $P^{\vee}_{AB,r+k}$ is the projection onto the $\binom{d_A d_B+r+k-1}{r+k}$-dimensional symmetric subspace of $(\H_A \otimes \H_B)^{\otimes(r+k)}$ (i.e., the symmetrization is performed between the $r+k$ copies of $\H_A \otimes \H_B$, but not between $\H_A$ and $\H_B$).

In the $k = 1$ case, $\Phi_r^k$ is exactly the same as the linear map $\Phi_r^1$ from Equation~\eqref{eq:phi_r_level1}, which can be seen by noting that $\range(P^{\wedge}_{A,r+1} \otimes P^{\wedge}_{B,r+1}) \subseteq \range(P^{\vee}_{AB,r+1})$. Theorem~\ref{thm:detector_level1} still works if $\Phi_r^1$ is replaced by $\Phi_r^k$, but we now furthermore get a converse that completely characterizes \emph{all} $r$-entangled subspaces:

\begin{theorem}\label{thm:detector_levelk}
    Let $\S \subseteq \H_A \otimes \H_B$ be a subspace with basis $\{\ket{x_1},\ldots,\ket{x_{d_S}}\}$. Then $\S$ is $r$-entangled if and only if there exists an integer $1 \leq k \leq (\max\{r,2\}+1)^{d_Ad_B}-r$ such that the set
    \begin{align}\label{eq:li_levelk}
        \left\{\Phi_{r}^{k}\big(\ket{x_{j_1}} \otimes \dots \otimes \ket{x_{j_{r+k}}}\big) : 1 \leq j_1 \leq \cdots \leq j_{r+k} \leq d_S \right\}
    \end{align}
    is linearly independent. Furthermore, if a subspace $\S$ is certified to be $r$-entangled at the $k$-th level of the hierarchy (i.e., if the set \eqref{eq:li_levelk} is linearly independent), then a generic $d_S$-dimensional subspace will be certified at the $k$-th level.
\end{theorem}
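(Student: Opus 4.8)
The plan is to translate the linear-algebraic condition of Theorem~\ref{thm:detector_levelk} into commutative algebra and then apply the projective Nullstellensatz. Fix the basis and write a general element of $\S$ as $\ket{x} = \sum_{i=1}^{d_S} c_i \ket{x_i}$. By Proposition~\ref{prop:phi_rk_rank_det}, the coordinates of $\Phi_r^1(\ket{x}^{\otimes(r+1)})$ in a fixed basis of its range are homogeneous polynomials $f_\alpha \in \complex[c_1,\dots,c_{d_S}]$ of degree $r+1$, and $\S$ is $r$-entangled precisely when these $f_\alpha$ have no common zero other than $c = 0$, i.e.\ when the projective variety $V_+(f_\alpha) \subseteq \proj^{d_S-1}$ is empty. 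Let $I = (f_\alpha) \subseteq \complex[c]$ and $\mathfrak m = (c_1,\dots,c_{d_S})$.

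The bridge to $\Phi_r^k$ is the identity $\Phi_r^k(\ket{x}^{\otimes(r+k)}) = \Phi_r^1(\ket{x}^{\otimes(r+1)}) \otimes \ket{x}^{\otimes(k-1)}$, which holds because $\ket{x}^{\otimes(r+k)}$ is already symmetric (so $P^{\vee}_{AB,r+k}$ acts as the identity) and the antisymmetrizers touch only the first $r+1$ tensor factors. First I would show that the coordinates of $\Phi_r^k(\ket{x}^{\otimes(r+k)})$, regarded as polynomials in $c$, span exactly the degree-$(r+k)$ graded piece $I_{r+k}$: each such coordinate factors as $f_\alpha(c)\, g_\beta(c)$ with $g_\beta$ a coordinate of $\ket{x}^{\otimes(k-1)}$, and since the linear forms $\langle b | x\rangle$ span $\S^*$ as $b$ ranges over a basis of $\H_A\otimes\H_B$, the $g_\beta$ span all of $\complex[c]_{k-1}$. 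Next, because $\Phi_r^k$ begins with the idempotent $P^{\vee}_{AB,r+k}$, each vector in~\eqref{eq:li_levelk} equals $\Phi_r^k$ applied to the symmetrization $P^{\vee}_{AB,r+k}(\ket{x_{j_1}}\otimes\cdots\otimes\ket{x_{j_{r+k}}})$, and these symmetrizations form a basis of the symmetric power $\vee^{r+k}\S$. Hence~\eqref{eq:li_levelk} is linearly independent iff $\Phi_r^k$ is injective on $\vee^{r+k}\S$, iff its rank equals $\binom{d_S+r+k-1}{r+k} = \dim \vee^{r+k}\S$. But the rank of a linear map equals the dimension of the span of its coordinate functionals, which here is exactly $\dim I_{r+k}$; so~\eqref{eq:li_levelk} is linearly independent precisely when $I_{r+k} = \complex[c]_{r+k}$, equivalently $\mathfrak m^{r+k} \subseteq I$.

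With this dictionary, the biconditional is the homogeneous Nullstellensatz: $V_+(I) = \emptyset$ iff $\mathfrak m \subseteq \sqrt I$ iff $\mathfrak m^N \subseteq I$ for some $N$, and setting $N = r+k$ this says exactly that~\eqref{eq:li_levelk} becomes linearly independent at some finite level $k$. (Note $\mathfrak m^N \subseteq I$ forces $\mathfrak m^{N+1} \subseteq I$, so certification persists at all higher levels and the minimal certifying level is well defined.) To obtain the explicit range $k \le (\max\{r,2\}+1)^{d_Ad_B}-r$ I would feed this into an effective version of the projective Nullstellensatz: since $I$ is generated in degree $r+1$ in at most $d_Ad_B$ variables (bounding $d_S \le d_A d_B$, or equivalently working in the ambient $\H_A\otimes\H_B$ and adjoining the linear forms cutting out $\S$), a quantitative Nullstellensatz / Castelnuovo--Mumford regularity estimate yields $\mathfrak m^N \subseteq I$ for $N \le (\max\{r,2\}+1)^{d_Ad_B}$, the base being the generator degree $r+1$, increased to $3$ in the quadratic case $r=1$. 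I expect this quantitative step to be the main obstacle: qualitative finiteness is immediate, whereas pinning down this exact exponential form and justifying the $\max\{r,2\}$ correction is where the real work lies and where an external effective-Nullstellensatz bound must be invoked.

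For the final genericity claim, I would argue that certification at level $k$ is a Zariski-open condition on the Grassmannian $\mathrm{Gr}(d_S, d_A d_B)$. The entries of the matrix whose columns are the vectors in~\eqref{eq:li_levelk} are polynomial in the components of $\ket{x_1},\dots,\ket{x_{d_S}}$, and full column rank (the certification condition) is the non-vanishing of at least one maximal minor. This condition is moreover independent of the chosen basis of $\S$, since a change of basis multiplies the matrix on the right by the invertible induced action on $\vee^{r+k}\S$ and so preserves column rank. Thus the set of $d_S$-dimensional subspaces certified at level $k$ is Zariski-open. If it is nonempty --- which is exactly the hypothesis that some specific $\S$ is certified at level $k$ --- then its complement is a proper subvariety, hence of measure zero, so a Haar-random (generic) $d_S$-dimensional subspace avoids it with probability one and is certified at level $k$.
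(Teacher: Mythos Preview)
Your argument is correct but proceeds along a genuinely different line from the paper's.

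The paper works in the ambient space $\H_A\otimes\H_B$: it introduces an auxiliary map
\[
\Psi_r^1 = \begin{bmatrix}\Phi_r^1\\ (P_{\S}^\perp\otimes I_{AB,r})P^{\vee}_{AB,r+1}\end{bmatrix}
\]
so that $\Psi_r^1(\ket{x}^{\otimes(r+1)})=0$ simultaneously encodes ``$\ket{x}\in\S$'' and ``$\sr{\ket{x}}\le r$'', applies a packaged Nullstellensatz (its Theorem~\ref{thm:nullstellensatz}) in $d_Ad_B$ variables, and then unwinds the resulting condition $\range(P^\vee_{AB,r+k})\cap\ker(\Psi_r^k)=\{0\}$ via the commutation of $P^\vee_{AB,r+k}$ with $P_\S^{\otimes(r+k)}$. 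By contrast, you parametrize $\S$ intrinsically by $c\in\complex^{d_S}$, identify linear independence of~\eqref{eq:li_levelk} with the purely algebraic condition $I_{r+k}=\complex[c]_{r+k}$, and invoke the Nullstellensatz directly in $d_S$ variables. Your key translation step (coordinate functionals of $\Phi_r^k$ on $\vee^{r+k}\S$ span exactly $I_{r+k}$, so injectivity is equivalent to $\mathfrak m^{r+k}\subseteq I$) is correct and makes the equivalence transparent.

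What each approach buys: yours is shorter, avoids the auxiliary $\Psi_r^1$ and the range-intersection calculation, and in fact yields the sharper bound $k\le(\max\{r,2\}+1)^{d_S}-r$ (since Koll\'ar's bound is applied in $d_S\le d_Ad_B$ variables), which of course implies the stated bound. The paper's route has the advantage that its Theorem~\ref{thm:nullstellensatz} is formulated abstractly and then reused verbatim for the multipartite completely-entangled case (Theorem~\ref{thm:ces_detector}). Your genericity argument via Zariski-openness and basis independence is essentially the same as the paper's minor-nonvanishing argument.
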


The proof of Theorem~\ref{thm:detector_levelk} is rather long and technical, so we defer it to the appendix. This theorem really does establish a \emph{hierarchy} for detecting $r$-entanglement in a subspace: if the set~\eqref{eq:li_levelk} is linearly independent for a particular value of $k$, then it is linearly independent for all larger values of $k$ as well.

While Theorem~\ref{thm:detector_levelk} only guarantees that the hierarchy detects \textit{all} $r$-entangled subspaces at its very high $k = (\max\{r,2\}+1)^{d_Ad_B} - r$ level, it is remarkable that a bound that does not depend on $\S$ exists at all (after all, no analogous bound can exist for semidefinite programming hierarchies for the separability problem \cite{Faw21}). Furthermore, the last sentence of the theorem allows us to show in practice that a much lower level (i.e., smaller value of $k$) suffices to detect most $r$-entangled subspaces, simply by finding a single $r$-entangled subspace of the maximal dimension $(d_A - r)(d_B - r)$ that is detected at that low level.

We have found such examples already at the $k = 2$ level of the hierarchy in many low-dimensional cases. Example~\ref{exx:subspace_33_level2} illustrates how such a certification at the 2nd level of the hierarchy works, and Table~\ref{tab:levelk_numerics} provides some numerics to show how long it takes this 2nd level of the hierarchy to certify $r$-entanglement of a maximum-dimensional subspace for some small values of the local dimensions and $r$.

\begin{example}\label{exx:subspace_33_level2}
    Suppose $d_A = d_B = 4$, $r = 1$, and $\ket{x_1},\dots,\ket{x_8}$ are as in Example~\ref{exx:dim4subspace}. If
    \[
        \ket{x_9} = \frac{1}{2}\big( \ket{0} \otimes \ket{0} + \ket{1} \otimes \ket{1} - \ket{2} \otimes \ket{2} - \ket{3} \otimes \ket{3} \big)
    \]
    then $\S := \mathrm{span}\big\{ \ket{x_1}, \ldots, \ket{x_9} \big\}$ cannot possibly be shown to be entangled by the first level of the hierarchy, since its dimension is too large. However, solving the $\binom{d_A}{r+1}\binom{d_B}{r+1}(d_Ad_B)^{k-1} \times \binom{d_S + r + k - 1}{r + k} = 576 \times 165$ linear system described by the second level of the hierarchy (i.e., Theorem~\ref{thm:detector_levelk} when $k = 2$) verifies that it is indeed entangled.
    
    As a bit of a side note, we observe that the number of rows in this linear system could be taken to be slightly less than $\binom{d_A}{r+1}\binom{d_B}{r+1}(d_Ad_B)^{k-1}$, since $\rank(\Phi_r^k)$ is actually smaller than $\rank(P^{\wedge}_{A,r+1} \otimes P^{\wedge}_{B,r+1} \otimes I_{AB,k-1}) = \binom{d_A}{r+1}\binom{d_B}{r+1}(d_Ad_B)^{k-1}$. However, indexing the range of $\Phi_r^k$ so as to take advantage of this (or even computing its rank exactly) is quite difficult.
\end{example}

\begin{table}[!htb]
    \begin{center}
        \begin{tabular}{ @{ \ }c@{ \ \ } @{ \ \ }c@{ \ \ \ }c@{ \quad \quad }c@{ \ \ \ }c@{ \ } }
            \toprule
            \multicolumn{1}{c}{} & \multicolumn{2}{@{}c}{$r = 1, k = 2$} & \multicolumn{2}{@{}c}{$r = 2, k = 2$} \\ \cmidrule(r){2-3}\cmidrule{4-5}
            $d_A = d_B$ & max. $d_S$ & time & max. $d_S$ & time \\ \midrule
            $3$ & $4$ & $0.11$ s & $1$ & $0.58$ s \\
            $4$ & $9$ & $0.47$ s & $4$ & $7.39$ s \\
            $5$ & $16$ & $1.38$ s & $9$ & $22.01$ s \\
            $6$ & $25$ & $8.04$ s & $16$ & $2.59$ min \\
            $7$ & $36$ & $48.42$ s & $25$ & $33.18$ min \\\bottomrule
        \end{tabular}
        \caption{A summary of how long it takes the 2nd level of the hierarchy (i.e., Theorem~\ref{thm:detector_levelk} with $k = 2$) to certify $r$-entanglement of a subspace of $\H_A \otimes \H_B$ with dimension $(d_A-r)^2$ (i.e., the maximum dimension), for small values of $d_A = d_B$ and $r$.}\label{tab:levelk_numerics}
    \end{center}
\end{table}

The size of the linear system described by Theorem~\ref{thm:detector_levelk} increases exponentially with $k$. However, it is also very sparse, so it can typically be solved even if it has hundreds of thousands of rows and columns.

\section{Certifying Schmidt Number of Low-Rank Mixed States}

The Schmidt number \cite{TH00} of a mixed quantum state $\rho$ acting on $\H_A \otimes \H_B$, denoted by $\sn{\rho}$, is the least integer $r$ such that $\rho$ is a convex combination of projectors onto Schmidt-rank-$r$ pure states from $\H_A \otimes \H_B$:
\begin{align}\label{eq:density_matrix_spectral}
    \rho = \sum_j p_j \ketbra{v_j}{v_j},
\end{align}
where $\{p_j\}$ is a probability distribution and each $\ket{v_j}$ has Schmidt rank at most $r$. If $\sn{\rho} = 1$ then $\rho$ is called separable, and it is called entangled otherwise \cite{Wer89}.

Determining whether a given mixed state is separable or entangled (or more generally, determining a state's Schmidt number) is a hard problem \cite{Gha10,Gur03}, so in practice numerous one-sided tests are used. One such test is the range criterion \cite{Hor97}, which says that if $\range(\rho)$ is not spanned by members of $\H_A \otimes \H_B$ with Schmidt rank at most $r$, then $\sn{\rho} \geq r+1$ (a fact that follows immediately from the decomposition~\eqref{eq:density_matrix_spectral} of $\rho$).

While the range criterion is simple to state and prove, actually making use of it is difficult, since it is difficult to show that a given subspace of $\H_A \otimes \H_B$ is not spanned by pure states with small Schmidt rank. Theorem~\ref{thm:detector_levelk} helps solve this problem, and immediately gives us the following result:

\begin{corollary}\label{cor:range_criterion_from_phir}
    Let $\rho$ be a mixed state acting on $\H_A \otimes \H_B$ with $d = \rank(\rho)$, and let $\{\ket{x_1},\ldots,\ket{x_d}\} \subseteq \H_A \otimes \H_B$ be a basis of $\range(\rho)$. If there exists an integer $k \geq 1$ such that
    \begin{align}\label{eq:schmidt_number_range}
        \left\{\Phi_{r}^{k}\big(\ket{x_{j_1}} \otimes \dots \otimes \ket{x_{j_{r+k}}}\big) : 1 \leq j_1 \leq \cdots \leq j_{r+k} \leq d \right\}
    \end{align}
    is linearly independent, then $\sn{\rho} \geq r+1$.
\end{corollary}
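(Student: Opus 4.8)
The plan is to obtain this as an immediate consequence of Theorem~\ref{thm:detector_levelk} together with the range criterion of \cite{Hor97}, with essentially no new work. First I would set $\S \defeq \range(\rho)$, a subspace of $\H_A \otimes \H_B$ of dimension $d = \rank(\rho)$, and observe that by hypothesis $\{\ket{x_1},\ldots,\ket{x_d}\}$ is a basis of $\S$, so that the set in~\eqref{eq:schmidt_number_range} is literally the set~\eqref{eq:li_levelk} for this $\S$ (with $d_S = d$). The hypothesis then says that this set is linearly independent for some $k \geq 1$, so the forward (``if'') direction of Theorem~\ref{thm:detector_levelk} — which, like Proposition~\ref{prop:phi_rk_rank_det} and Theorem~\ref{thm:detector_level1}, holds for \emph{every} $k \geq 1$ and not only for $k$ below the termination bound — yields that $\S$ is $r$-entangled; i.e., every pure state in $\range(\rho)$ has Schmidt rank at least $r+1$.

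Next I would apply the range criterion. Decomposing $\rho = \sum_j p_j \ketbra{v_j}{v_j}$ as in~\eqref{eq:density_matrix_spectral} with each $\ket{v_j}$ of Schmidt rank at most $\sn{\rho}$, every $\ket{v_j}$ lies in $\range(\rho)$. If we had $\sn{\rho} \leq r$, then $\range(\rho)$ would be spanned by the vectors $\{\ket{v_j}\}$, all of Schmidt rank at most $r$, contradicting the fact that $\S = \range(\rho)$ is $r$-entangled (it contains no such vectors at all). Hence $\sn{\rho} \geq r+1$, which is the claim.

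There is really no substantive obstacle here: the entire content of the corollary is already carried by Theorem~\ref{thm:detector_levelk}, and the only points that require (minor) care are matching up the definitions — ``$r$-entangled'' is strictly stronger than the ``$\range(\rho)$ not spanned by Schmidt-rank-$\leq r$ vectors'' hypothesis used by the range criterion, so no bridging argument is needed — and noting that $d = \rank(\rho) = \dim\range(\rho)$, which is what guarantees that the given spanning set is genuinely a basis so that Theorem~\ref{thm:detector_levelk} applies verbatim. Accordingly, I would present this proof in just a few lines.
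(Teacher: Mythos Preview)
Your proposal is correct and is exactly the argument the paper has in mind: it states the corollary as an immediate consequence of Theorem~\ref{thm:detector_levelk} combined with the range criterion, without giving a separate proof. Your care in noting that the ``if'' direction of Theorem~\ref{thm:detector_levelk} applies for any $k \geq 1$ (not just $k$ below the termination bound) is appropriate and follows from the $2\Rightarrow 1$ direction of Theorem~\ref{thm:nullstellensatz}.
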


This corollary works best when applied to low-rank mixed states, and in particular we expect the first (i.e., $k = 1$) level of the hierarchy to detect most states' Schmidt number when $d_S = \rank(\rho)$ satisfies Inequality~\eqref{eq:dim_bound_level1}. Higher levels of the hierarchy allow for the certification of Schmidt number of higher-rank states, even ones whose entanglement cannot be detected by the celebrated positive partial transpose (PPT) criterion \cite{Per96}.

\begin{example}\label{ex:tiles_upb}
    Recall that if $U \subseteq \H_A \otimes \H_B$ is an unextendible product basis \cite{BDMSST99}, then the density matrix
    \[
        \rho_U := \frac{1}{d_Ad_B - |U|}\left( I - \sum_{\ket{v} \in U} \ketbra{v}{v} \right)
    \]
    is a PPT entangled state. For example, let $d_A = d_B = 3$ and consider the $5$-state ``Tiles'' UPB \cite{DMSST03} (here we omit normalization factors for brevity):
    \begin{align*}
        U_{\textup{tiles}} := \big\{ & \ket{0} \otimes (\ket{0} - \ket{1}), \ket{2} \otimes (\ket{1} - \ket{2}) \\
        & (\ket{0}-\ket{1}) \otimes \ket{2}, (\ket{1}-\ket{2}) \otimes \ket{0}, \\
        & (\ket{0} + \ket{1} + \ket{2}) \otimes (\ket{0} + \ket{1} + \ket{2}) \big\} \subset \H_A \otimes \H_B.
    \end{align*}
    The associated PPT entangled state $\rho_{U_{\textup{tiles}}}$ has $d = \rank(\rho_{U_{\textup{tiles}}}) = 4$, which is too high-rank for the $k = 1$ level of Corollary~\ref{cor:range_criterion_from_phir} to be able to detect entanglement in.
    
    However, we can apply the second level of that hierarchy by picking a basis of $\range(\rho)$ and then solving the $\binom{d_A}{r+1}\binom{d_B}{r+1}(d_Ad_B)^{k-1} \times \binom{d + r + k - 1}{r + k} = 81 \times 20$ linear system described by Corollary~\ref{cor:range_criterion_from_phir}. Doing so certifies (in about $0.1$ seconds) that $\rho_{U_{\textup{tiles}}}$ is entangled.
\end{example}

The above example is not a fluke: the map $\Phi_1^k$ detects entanglement in most low-rank states. For example, repeating the above example with the ``Tiles'' UPB replaced by any of the ``Pyramid'' \cite{BDMSST99}, ``QuadRes'', or ``GenTiles2'' UPBs \cite{DMSST03} yields the exact same conclusions: $\Phi_1^2$ detects the entanglement in the associated PPT entangled state.

The following example shows how the same method can be used to show that a low-rank mixed state isn't just entangled, but has Schmidt number strictly larger than $2$:

\begin{example}
    Let $d_A = d_B = 4$ and consider the mixed state
    \[
        \rho = \frac{1}{3}\sum_{j=1}^3 \ketbra{x_j}{x_j} \in \H_A \otimes \H_B,
    \]
    where (we again omit normalization factors for brevity)
    \begin{align*}
        \ket{x_1} & = \ket{0} \otimes \ket{0} + \ket{1} \otimes \ket{1} + \ket{2} \otimes \ket{2} + \ket{3} \otimes \ket{3}, \\
        \ket{x_2} & = \ket{0} \otimes \ket{1} + \ket{1} \otimes \ket{2} + \ket{2} \otimes \ket{3} + \ket{3} \otimes \ket{0}, \\
        \ket{x_3} & = \ket{0} \otimes \ket{2} + \ket{1} \otimes \ket{3} + \ket{2} \otimes \ket{0} - \ket{3} \otimes \ket{1}.
    \end{align*}

    The PPT criterion readily shows that $\rho$ is entangled (i.e., $\sn{\rho} \geq 2$), but we can say more by making use of $\Phi_2^1$. In particular, $\rho$ has rank $d = 3$, and solving the $\binom{m}{r+1}\binom{n}{r+1} \times \binom{d+r}{r+1} = 16 \times 4$ system of linear equations described by Corollary~\ref{cor:range_criterion_from_phir} shows that $\sn{\rho} \geq 3$.
\end{example}


\section{Multipartite Completely Entangled Subspaces}

Our hierarchy generalizes straightforwardly to the multipartite scenario (i.e., the tensor product of three or more Hilbert spaces). For example, a completely entangled subspace (CES) $\S$ of $\H_A \otimes \H_B \otimes \H_C$ is one containing no product vector (i.e., no vector of the form $\ket{u} \otimes \ket{v} \otimes \ket{w}$) \cite{Par04,Bha06}. We define $P^{\textup{CES}}_{ABC}$ to be the orthogonal projection onto
\begin{align}\label{eq:CES_map_k1}
    \w^2\H_A \otimes \w^2(\H_B\otimes \H_C)+\w^2(\H_A \otimes \H_B)\otimes \w^2 \H_C,
\end{align}
where $\w^2 \H$ denotes the antisymmetric (i.e., wedge) tensor product of two copies of $\H$. We emphasize that the subspace~\eqref{eq:CES_map_k1} is a sum of subspaces of $(\H_A \otimes \H_B \otimes \H_C)^{\otimes 2}$, but it is not a \emph{direct} sum of subspaces.

Since $\ket{x} \in \H_A \otimes \H_B \otimes \H_C$ is a product vector if and only if it is product across each of the $\H_A \otimes (\H_B \otimes \H_C)$ and $(\H_A \otimes \H_B) \otimes \H_C$ bipartitions, we have $\ket{x}$ being a product vector if and only if $P^{\textup{CES}}_{ABC}(\ket{x}^{\otimes 2}) = 0$. If we define the linear map
\begin{align}\label{eq:CES_map}
    \Phi_{\textup{CES}}^{k} \defeq \big(P^{\textup{CES}}_{ABC} \otimes I_{ABC,k-1}\big)P^{\vee}_{ABC,k+1},
\end{align}
then we have the following theorem that is directly analogous to the bipartite hierarchy provided by Theorem~\ref{thm:detector_levelk}:

\begin{theorem}\label{thm:ces_detector}
    Let $\S \subseteq \H_A \otimes \H_B \otimes \H_C$ be a subspace with basis $\{\ket{x_1},\ldots,\ket{x_{d_S}}\}$. Then $\S$ is completely entangled if and only if there exists an integer $1 \leq k \leq 3^{d_A d_B d_C}-r$ such that the set
    \begin{align}\label{eq:ces_detection}
        \left\{\Phi_{\textup{CES}}^{k}\big(\ket{x_{j_1}} \otimes \dots \otimes \ket{x_{j_{k+1}}}\big) : 1 \leq j_1 \leq \cdots \leq j_{k+1} \leq d_S \right\}
    \end{align}
    is linearly independent. Furthermore, if a subspace $\S$ is detected to be completely entangled at the $k$-th level of the hierarchy (i.e., if~\eqref{eq:ces_detection} is linearly independent), then a generic $d_S$-dimensional subspace will be detected at the $k$-th level.
\end{theorem}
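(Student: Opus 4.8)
The plan is to follow the strategy behind Theorem~\ref{thm:detector_levelk}, using the equivalence ``$\ket x \in \H_A \otimes \H_B \otimes \H_C$ is a product vector $\iff P^{\textup{CES}}_{ABC}(\ket x^{\otimes 2}) = 0$'' in place of the Schmidt-rank criterion of Proposition~\ref{prop:phi_rk_rank_det}. Since $P^{\vee}_{ABC,k+1}$ symmetrizes the $k+1$ tensor factors, $\Phi_{\textup{CES}}^{k}\big(\ket{x_{j_1}} \otimes \cdots \otimes \ket{x_{j_{k+1}}}\big)$ depends only on the multiset $\{j_1,\dots,j_{k+1}\}$ and equals $\Phi_{\textup{CES}}^{k}$ applied to the associated symmetrized product; as these symmetrized products form a basis of the symmetric power $\mathrm{Sym}^{k+1}(\S)$, the set~\eqref{eq:ces_detection} is linearly independent if and only if $\Phi_{\textup{CES}}^{k}$ is injective on $\mathrm{Sym}^{k+1}(\S)$. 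The ``only if'' (easy) direction is then immediate: if some $\ket x \in \S$ is a product vector, then $\Phi_{\textup{CES}}^{k}(\ket x^{\otimes(k+1)}) = P^{\textup{CES}}_{ABC}(\ket x^{\otimes 2}) \otimes \ket x^{\otimes(k-1)} = 0$, and expanding $\ket x^{\otimes(k+1)}$ in this basis exhibits a nonzero kernel element, so \eqref{eq:ces_detection} is dependent. The hierarchy (nesting of levels) follows from the ideal-theoretic description below.

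For the converse, let $J \subseteq \mathrm{Sym}(\S^*)$ be the homogeneous ideal generated by the coordinates of the quadratic map $\ket y \mapsto P^{\textup{CES}}_{ABC}(\ket y^{\otimes 2})$, $\ket y \in \S$. By the product-vector equivalence, the projective variety $V(J) \subseteq \proj(\S)$ is exactly the set of product vectors contained in $\S$, so $\S$ is completely entangled if and only if $V(J) = \emptyset$. Using that $\mathrm{Sym}^{k+1}(\S)$ is spanned by pure powers $\ket y^{\otimes(k+1)}$ together with $\Phi_{\textup{CES}}^{k}(\ket y^{\otimes(k+1)}) = P^{\textup{CES}}_{ABC}(\ket y^{\otimes 2}) \otimes \ket y^{\otimes(k-1)}$, I would check, via the natural pairing $\mathrm{Sym}^{k+1}(\S) \times \mathrm{Sym}^{k+1}(\S^*) \to \complex$ sending $\ket y^{\otimes(k+1)}$ to evaluation at $\ket y$, that the kernel of $\Phi_{\textup{CES}}^{k}|_{\mathrm{Sym}^{k+1}(\S)}$ is the annihilator of the graded piece $J_{k+1}$: pairing the image against a pure tensor $\eta \otimes \zeta_1 \otimes \cdots \otimes \zeta_{k-1}$, with $\eta$ in the dual of the range of $P^{\textup{CES}}_{ABC}$ and each $\zeta_i \in (\H_A \otimes \H_B \otimes \H_C)^*$, returns exactly the product of the quadric $\ket y \mapsto \langle \eta, P^{\textup{CES}}_{ABC}(\ket y^{\otimes 2})\rangle$ with $\zeta_1\cdots\zeta_{k-1}$, and such products span $J_2 \cdot \mathrm{Sym}^{k-1}(\S^*) = J_{k+1}$ because $J$ is generated in degree $2$. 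Hence \eqref{eq:ces_detection} is linearly independent if and only if $J_{k+1} = \mathrm{Sym}^{k+1}(\S^*)$. By Hilbert's projective Nullstellensatz, $V(J) = \emptyset$ if and only if the irrelevant ideal $\mathfrak m$ satisfies $\mathfrak m^{N} \subseteq J$ for some $N$, equivalently $J_d = \mathrm{Sym}^d(\S^*)$ for all large $d$. Chaining these equivalences gives the stated ``if and only if'', and the inclusion $\mathrm{Sym}^1(\S^*) \cdot J_{k+1} \subseteq J_{k+2}$ shows that $J_{k+1} = \mathrm{Sym}^{k+1}(\S^*)$ forces $J_{k+2} = \mathrm{Sym}^{k+2}(\S^*)$, i.e.\ the hierarchy property.

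The genericity claim is handled as in Theorem~\ref{thm:detector_levelk}: linear independence of \eqref{eq:ces_detection} is unchanged under a change of basis of $\S$ (which transforms the set through the invertible operator $\mathrm{Sym}^{k+1}(M)$), so it defines a subset of the Grassmannian $\mathrm{Gr}(d_S,\H_A \otimes \H_B \otimes \H_C)$; this subset is Zariski-open, being the non-vanishing of some maximal minor of a matrix whose entries are polynomial in local coordinates, and a nonempty Zariski-open subset of the irreducible Grassmannian is dense and of full unitarily-invariant measure. Thus certification of a single subspace at level $k$ implies certification of a generic subspace at level $k$.

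The step I expect to be the main obstacle is the explicit upper bound on $k$ in the statement: the projective Nullstellensatz yields existence of the threshold $N$ but no estimate for it. Here I would transcribe the quantitative part of the appendix proof of Theorem~\ref{thm:detector_levelk}, which controls the degree of the Nullstellensatz certificate; the adaptation should be routine bookkeeping, with $\H_A \otimes \H_B$ replaced by $\H_A \otimes \H_B \otimes \H_C$ throughout (so $d_A d_B$ becomes $d_A d_B d_C$ in the exponent) and the variety of Schmidt-rank-$\le 1$ vectors replaced by the variety of fully product vectors, which is the intersection of the two bipartite Schmidt-rank-$\le 1$ varieties corresponding to the two summands in~\eqref{eq:CES_map_k1}; the base of the exponent is then inherited from the $r = 1$ case of Theorem~\ref{thm:detector_levelk}.
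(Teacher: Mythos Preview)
Your proposal is correct but follows a genuinely different route from the paper. The paper simply says the theorem follows from Theorem~\ref{thm:nullstellensatz} via the same manipulations used for Theorem~\ref{thm:detector_levelk}: one works in the ambient space $\H_X=\H_A\otimes\H_B\otimes\H_C$, builds the stacked map
\[
\Psi^1=\begin{bmatrix}P^{\textup{CES}}_{ABC}\\ (P_{\S}^{\perp}\otimes I_{ABC,1})P^{\vee}_{ABC,2}\end{bmatrix},
\]
applies Theorem~\ref{thm:nullstellensatz} with $r=1$, and then unwinds $\range(P^{\vee}_{ABC,k+1})\cap\ker(\Psi^k)=\{0\}$ using the commuting-projections computation $\range(P^{\vee}_{ABC,k+1})\cap\range(P_{\S}^{\otimes(k+1)})=\spn\{P^{\vee}_{ABC,k+1}(\ket{x_{j_1}}\otimes\cdots\otimes\ket{x_{j_{k+1}}})\}$ exactly as in the appendix.

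You instead work intrinsically in $\S$: you form the ideal $J\subseteq\mathrm{Sym}(\S^*)$ generated by the restricted quadrics, identify $\ker(\Phi_{\textup{CES}}^{k}|_{\mathrm{Sym}^{k+1}(\S)})$ with the annihilator of $J_{k+1}$ via the adjoint computation, and then invoke the projective Nullstellensatz on $\proj(\S)$ rather than on the ambient projective space. This is cleaner conceptually---it bypasses the $\Psi$ stacking and the commuting-projections lemma entirely---and it explains the hierarchy property $J_{k+1}=\mathrm{Sym}^{k+1}(\S^*)\Rightarrow J_{k+2}=\mathrm{Sym}^{k+2}(\S^*)$ for free from the ideal structure. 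A bonus you do not exploit: since your Nullstellensatz lives over $d_S$ variables with degree-$2$ generators, Koll\'ar's bound applied in your setup would give the sharper threshold $k\le 3^{d_S}-1$ rather than the paper's $3^{d_Ad_Bd_C}-1$. The paper's route, by contrast, gets the explicit bound directly out of Theorem~\ref{thm:nullstellensatz} with no extra work, and reuses the exact algebra already written down for Theorem~\ref{thm:detector_levelk}. Your genericity paragraph is the same idea as the paper's, phrased a bit more carefully (on the Grassmannian rather than on tuples of basis vectors).
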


The above theorem follows from Theorem~\ref{thm:nullstellensatz} via analogous arguments to those used in the proof of Theorem~\ref{thm:detector_levelk}, in the appendix.

\begin{example}\label{exam:ces_maximal}
    The largest possible dimension of a completely entangled subspace of $\H_A \otimes \H_B \otimes \H_C$ is $d_Ad_Bd_C - d_A - d_B - d_C + 2$, and one particular example of such a subspace is \cite{Bha06}
    \begin{align*}
        \S := \spn\Big\{ & \ket{i_A} \otimes \ket{i_B} \otimes \ket{i_C} - \ket{j_A} \otimes \ket{j_B} \otimes \ket{j_C} : \\
        & i_A + i_B + i_C = j_A + j_B + j_C \\
        & 0 \leq i_A, j_A < d_A, 0 \leq i_B, j_B < d_B, 0 \leq i_C, j_C < d_C \Big\}.
    \end{align*}
    Our method is able to certify this maximal-dimension CES for several small values of $d_A$, $d_B$, and $d_C$, as summarized in Table~\ref{tab:ces_numerics}.
\end{example}

\begin{table}[!htb]
    \begin{center}
        \begin{tabular}{ @{ \ }c@{ \ \ } @{ \ \ }c@{ \ \ }c@{ \ \ }c@{ \ } }
            \toprule
            $(d_A, d_B, d_C)$ & max. $d_S$ & level $k$ & time \\ \midrule
            $(2,2,2)$ & $4$ & $2$ & $0.12$ s \\
            $(2,2,3)$ & $7$ & $2$ & $0.30$ s \\
            $(2,2,4)$ & $10$ & $2$ & $0.67$ s \\
            $(2,2,5)$ & $13$ & $2$ & $1.21$ s \\
            $(2,2,6)$ & $16$ & $2$ & $3.47$ s \\
            $(2,2,7)$ & $19$ & $2$ & $6.05$ s \\
            $(2,2,8)$ & $22$ & $2$ & $18.90$ s \\
            $(2,2,9)$ & $25$ & $2$ & $38.40$ s \\
            $(2,3,3)$ & $12$ & $3$ & $19.58$ s \\
            $(2,3,4)$ & $17$ & $3$ & $8.24$ min \\
            $(2,3,5)$ & $22$ & $3$ & $2.50$ h \\
            $(3,3,3)$ & $20$ & $4$ & $14.68$ h \\\bottomrule
        \end{tabular}
        \caption{A summary of which level $k$ of the hierarchy from Theorem~\ref{thm:ces_detector} can be used to detect entanglement in the maximum-dimension completely entangled subspace of $\H_A \otimes \H_B \otimes \H_C$ from Example~\ref{exam:ces_maximal}, for small values of $d_A$, $d_B$, and $d_C$, as well as the computational time taken to do the certification.}\label{tab:ces_numerics}
    \end{center}
\end{table}

Theorem~\ref{thm:ces_detector} generalizes straightforwardly to the case of $p > 3$ parties using the fact that a multipartite vector $\ket{x}$ is product if and only if it is product across $p-1$ of its single-party bipartitions, and redefining $P^{\textup{CES}}_{ABC}$ accordingly. For example, if $p = 4$ then we would define $P^{\textup{CES}}_{ABCD}$ to be the orthogonal projection onto the (non-direct) sum
\begin{align*}
    \w^2\H_A \otimes \w^2(\H_B\otimes \H_C \otimes \H_D) & + \w^2 \H_B \otimes \w^2(\H_A \otimes \H_C \otimes \H_D) \\
    & + \w^2 \H_C \otimes \w^2(\H_A \otimes \H_B \otimes \H_D)
\end{align*}
and then define $\Phi_{\textup{CES}}^{k} = \big(P^{\textup{CES}}_{ABCD} \otimes I_{ABCD,k-1}\big)P^{\vee}_{ABCD,k+1}$. This map, if substituted into Theorem~\ref{thm:ces_detector}, provides a complete hierarchy for detecting completely entangled subspaces in $4$-party systems.

\section{Multipartite Genuinely Entangled Subspaces}

Another notion of multipartite entanglement of a subspace is that of a genuinely entangled subspace, which is a subspace in which no pure state is a product state across any bipartition \cite{MR18,AHB19}. Genuine entanglement is a stricter requirement than complete entanglement, since pure states can be separable across one or more bipartitions without being a product vector.

Our hierarchy can be applied directly to the case of genuinely entangled subspaces simply by applying Theorem~\ref{thm:detector_levelk} across every bipartition. For example, when trying to certify genuine entanglement of a subspace of $\H_A \otimes \H_B \otimes \H_C$, we consider the map $\Phi_1^k$ from Equation~\eqref{eq:phi_rk_defn} with respect to a particular bipartition of $\H_A \otimes \H_B \otimes \H_C$. That is, we define
\begin{align*}
    \Phi_{AB,C}^k \defeq \big(P^{\wedge}_{AB,2} \otimes P^{\wedge}_{C,2} \otimes I_{ABC,k-1}\big)P^{\vee}_{ABC,k+1},
\end{align*}
and similarly for $\Phi_{AC,B}^k$ and $\Phi_{BC,A}^k$. Theorem~\ref{thm:detector_levelk} then immediately implies the following corollary:

\begin{corollary}\label{cor:genuine_entanglement}
    Let $\S \subseteq \H_A \otimes \H_B \otimes \H_C$ be a subspace with basis $\{\ket{x_1},\ldots,\ket{x_{d_S}}\}$. Then $\S$ is genuinely entangled if and only if there exists an integer $1 \leq k \leq 3^{d_Ad_Bd_C}-1$ such that the sets
    \begin{align*}
        & \left\{\Phi_{AB,C}^{k}\big(\ket{x_{j_1}} \otimes \dots \otimes \ket{x_{j_{1+k}}}\big) : 1 \leq j_1 \leq \cdots \leq j_{1+k} \leq d_S \right\}, \\
        & \left\{\Phi_{AC,B}^{k}\big(\ket{x_{j_1}} \otimes \dots \otimes \ket{x_{j_{1+k}}}\big) : 1 \leq j_1 \leq \cdots \leq j_{1+k} \leq d_S \right\}, \\
        & \left\{\Phi_{BC,A}^{k}\big(\ket{x_{j_1}} \otimes \dots \otimes \ket{x_{j_{1+k}}}\big) : 1 \leq j_1 \leq \cdots \leq j_{1+k} \leq d_S \right\}
    \end{align*}
    are all linearly independent.
\end{corollary}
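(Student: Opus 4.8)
The plan is to deduce the corollary directly from Theorem~\ref{thm:detector_levelk} by regarding $\H_A \otimes \H_B \otimes \H_C$ as a bipartite space in each of its three ways and then combining the resulting certificates into a single level $k$. First I would record the elementary reformulation of genuine entanglement: a pure state $\ket{x} \in \H_A \otimes \H_B \otimes \H_C$ is a product state across the bipartition $AB\,|\,C$ exactly when it has Schmidt rank $1$ with respect to the identification $\H_A \otimes \H_B \otimes \H_C \iso (\H_A \otimes \H_B) \otimes \H_C$, and similarly for $AC\,|\,B$ and $BC\,|\,A$. Hence $\S$ is genuinely entangled if and only if $\S$ is ($1$-)entangled as a subspace of each of $(\H_A \otimes \H_B)\otimes \H_C$, $(\H_A \otimes \H_C)\otimes\H_B$, and $(\H_B \otimes \H_C)\otimes\H_A$.

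Next I would check that for the bipartition $AB\,|\,C$ the map $\Phi_{AB,C}^k$ appearing in the corollary is precisely the map $\Phi_1^k$ of Equation~\eqref{eq:phi_rk_defn} for this bipartition, with the first ``party'' taken to be $\H_A \otimes \H_B$ (local dimension $d_A d_B$), the second ``party'' taken to be $\H_C$ (local dimension $d_C$), and $r = 1$: the factor $P^{\wedge}_{AB,2}\otimes P^{\wedge}_{C,2}$ is the antisymmetric projection $P^{\wedge}_{A,r+1}\otimes P^{\wedge}_{B,r+1}$ for this bipartition, $I_{ABC,k-1}$ is the identity on the $k-1$ unaffected copies of $\H_A\otimes\H_B\otimes\H_C=(\H_A\otimes\H_B)\otimes\H_C$, and $P^{\vee}_{ABC,k+1}$ symmetrizes the $r+k=k+1$ copies. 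Theorem~\ref{thm:detector_levelk} applied with $r=1$ and local dimensions $d_Ad_B$ and $d_C$ then says that $\S$ is entangled across $AB\,|\,C$ if and only if the set $\{\Phi_{AB,C}^{k}(\ket{x_{j_1}}\otimes\cdots\otimes\ket{x_{j_{1+k}}}) : 1\le j_1\le\cdots\le j_{1+k}\le d_S\}$ is linearly independent for some $k$ with $1 \le k \le (\max\{1,2\}+1)^{(d_Ad_B)d_C}-1 = 3^{d_Ad_Bd_C}-1$; the same statement holds verbatim for $AC\,|\,B$ and $BC\,|\,A$, since the product of the two local dimensions equals $d_Ad_Bd_C$ in every one of the three bipartitions, so the bound $3^{d_Ad_Bd_C}-1$ is uniform.

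Finally I would assemble the three equivalences into one. For the forward direction, genuine entanglement yields, by the two steps above, levels $k_1,k_2,k_3\le 3^{d_Ad_Bd_C}-1$ at which the three sets (built from $\Phi_{AB,C}^{k}$, $\Phi_{AC,B}^{k}$, and $\Phi_{BC,A}^{k}$ respectively) are linearly independent; taking $k:=\max\{k_1,k_2,k_3\}$, which still satisfies $k \le 3^{d_Ad_Bd_C}-1$, the monotonicity (``hierarchy'') part of Theorem~\ref{thm:detector_levelk}---linear independence at one level forces it at every higher level---makes all three sets simultaneously linearly independent at level $k$. Conversely, linear independence of all three sets at a common $k$ gives, by Theorem~\ref{thm:detector_levelk} applied to each bipartition, that $\S$ is entangled across each of the three bipartitions, hence genuinely entangled. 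The only point requiring genuine care---and thus the ``hardest'' step, although it is routine---is confirming that Theorem~\ref{thm:detector_levelk} is applied legitimately when one of the two bipartite factors is itself a tensor product, and that the dimension bound then collapses to the uniform value $3^{d_Ad_Bd_C}-1$ for every bipartition; everything else is bookkeeping with three copies of a theorem already in hand.
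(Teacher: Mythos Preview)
Your proposal is correct and matches the paper's own approach: the paper simply states that the corollary follows immediately from Theorem~\ref{thm:detector_levelk} applied across each of the three bipartitions (with $r=1$), which is exactly what you do. Your only addition is spelling out explicitly that the three individually-obtained levels $k_1,k_2,k_3$ can be replaced by $k=\max\{k_1,k_2,k_3\}$ via the hierarchy's monotonicity, a detail the paper leaves implicit.
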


\begin{example}\label{exam:genuine_ent_c333}
    Let $d_A = d_B = d_C = 3$ and consider the $5$-dimensional genuinely entangled subspace of $\H_A \otimes \H_B \otimes \H_C$ that was introduced in \cite{AHB19} (see Proposition~2 of that paper, and the discussion afterwards). To certify that this subspace is genuinely entangled, we can apply the $k = 1$ case of Corollary~\ref{cor:genuine_entanglement}, which requires us to solve three $108 \times 15$ linear systems. Doing so verifies (in about 0.4 seconds) that it is indeed genuinely entangled.
\end{example}

The above corollary generalizes straightforwardly to any number of parties by similarly applying the map $\Phi_1^k$ from Equation~\eqref{eq:phi_rk_defn} to all $2^{p-1} - 1$ bipartitions of the $p$ parties.

\section{Conclusions}

We have introduced a hierarchy of systems of linear equations for certifying that a given subspace is entangled. This hierarchy is complete in the sense that every entangled subspace is certified to be so at a finite level that is independent of the subspace being checked. Since the hierarchy only depends on solving a linear system, it can be implemented much more easily, and it runs much quicker, than methods based on semidefinite programming. The hierarchy works extremely well in practice, with many entangled subspaces of interest being detected already at the first or second level, and it generalizes straightforwardly to higher Schmidt rank and the multipartite setting.

\section*{Acknowledgements}

\begin{acknowledgments}
N.J.\ was supported by NSERC Discovery Grant RGPIN-2022-04098. B.L. acknowledges that this material is based upon work supported by the National Science Foundation under Award No. DMS-2202782. A.V. was supported by the National Science Foundation under grants Grant No.~CCF-1652491, CCF 1934931.
\end{acknowledgments}

\bibliography{references}

\begin{thebibliography}{40}%
\makeatletter
\providecommand \@ifxundefined [1]{%
 \@ifx{#1\undefined}
}%
\providecommand \@ifnum [1]{%
 \ifnum #1\expandafter \@firstoftwo
 \else \expandafter \@secondoftwo
 \fi
}%
\providecommand \@ifx [1]{%
 \ifx #1\expandafter \@firstoftwo
 \else \expandafter \@secondoftwo
 \fi
}%
\providecommand \natexlab [1]{#1}%
\providecommand \enquote  [1]{``#1''}%
\providecommand \bibnamefont  [1]{#1}%
\providecommand \bibfnamefont [1]{#1}%
\providecommand \citenamefont [1]{#1}%
\providecommand \href@noop [0]{\@secondoftwo}%
\providecommand \href [0]{\begingroup \@sanitize@url \@href}%
\providecommand \@href[1]{\@@startlink{#1}\@@href}%
\providecommand \@@href[1]{\endgroup#1\@@endlink}%
\providecommand \@sanitize@url [0]{\catcode `\\12\catcode `\$12\catcode
  `\&12\catcode `\#12\catcode `\^12\catcode `\_12\catcode `\%12\relax}%
\providecommand \@@startlink[1]{}%
\providecommand \@@endlink[0]{}%
\providecommand \url  [0]{\begingroup\@sanitize@url \@url }%
\providecommand \@url [1]{\endgroup\@href {#1}{\urlprefix }}%
\providecommand \urlprefix  [0]{URL }%
\providecommand \Eprint [0]{\href }%
\providecommand \doibase [0]{http://dx.doi.org/}%
\providecommand \selectlanguage [0]{\@gobble}%
\providecommand \bibinfo  [0]{\@secondoftwo}%
\providecommand \bibfield  [0]{\@secondoftwo}%
\providecommand \translation [1]{[#1]}%
\providecommand \BibitemOpen [0]{}%
\providecommand \bibitemStop [0]{}%
\providecommand \bibitemNoStop [0]{.\EOS\space}%
\providecommand \EOS [0]{\spacefactor3000\relax}%
\providecommand \BibitemShut  [1]{\csname bibitem#1\endcsname}%
\let\auto@bib@innerbib\@empty
\bibitem [{\citenamefont {G\"{u}hne}\ and\ \citenamefont
  {T\'{o}th}(2009)}]{GT09}%
  \BibitemOpen
  \bibfield  {author} {\bibinfo {author} {\bibfnamefont {O.}~\bibnamefont
  {G\"{u}hne}}\ and\ \bibinfo {author} {\bibfnamefont {G.}~\bibnamefont
  {T\'{o}th}},\ }\bibfield  {title} {\enquote {\bibinfo {title} {Entanglement
  detection},}\ }\href@noop {} {\bibfield  {journal} {\bibinfo  {journal}
  {Physics Reports}\ }\textbf {\bibinfo {volume} {474}},\ \bibinfo {pages}
  {1--75} (\bibinfo {year} {2009})}\BibitemShut {NoStop}%
\bibitem [{\citenamefont {Horodecki}\ \emph {et~al.}(2009)\citenamefont
  {Horodecki}, \citenamefont {Horodecki}, \citenamefont {Horodecki},\ and\
  \citenamefont {Horodecki}}]{HHH09}%
  \BibitemOpen
  \bibfield  {author} {\bibinfo {author} {\bibfnamefont {R.}~\bibnamefont
  {Horodecki}}, \bibinfo {author} {\bibfnamefont {P.}~\bibnamefont
  {Horodecki}}, \bibinfo {author} {\bibfnamefont {M.}~\bibnamefont
  {Horodecki}}, \ and\ \bibinfo {author} {\bibfnamefont {K.}~\bibnamefont
  {Horodecki}},\ }\bibfield  {title} {\enquote {\bibinfo {title} {Quantum
  entanglement},}\ }\href@noop {} {\bibfield  {journal} {\bibinfo  {journal}
  {Reviews of Modern Physics}\ }\textbf {\bibinfo {volume} {81}},\ \bibinfo
  {pages} {865--942} (\bibinfo {year} {2009})}\BibitemShut {NoStop}%
\bibitem [{\citenamefont {Parthasarathy}(2004)}]{Par04}%
  \BibitemOpen
  \bibfield  {author} {\bibinfo {author} {\bibfnamefont {K.~R.}\ \bibnamefont
  {Parthasarathy}},\ }\bibfield  {title} {\enquote {\bibinfo {title} {On the
  maximal dimension of a completely entangled subspace for finite level quantum
  systems},}\ }\href@noop {} {\bibfield  {journal} {\bibinfo  {journal} {Proc.
  Indian Acad. Sci. (Math. Sci.)}\ }\textbf {\bibinfo {volume} {114}},\
  \bibinfo {pages} {365--374} (\bibinfo {year} {2004})}\BibitemShut {NoStop}%
\bibitem [{\citenamefont {Bhat}(2006)}]{Bha06}%
  \BibitemOpen
  \bibfield  {author} {\bibinfo {author} {\bibfnamefont {B.~V.~R.}\
  \bibnamefont {Bhat}},\ }\bibfield  {title} {\enquote {\bibinfo {title} {A
  completely entangled subspace of maximal dimension},}\ }\href@noop {}
  {\bibfield  {journal} {\bibinfo  {journal} {International Journal of Quantum
  Information}\ }\textbf {\bibinfo {volume} {4}},\ \bibinfo {pages} {325--330}
  (\bibinfo {year} {2006})}\BibitemShut {NoStop}%
\bibitem [{\citenamefont {Horodecki}(1997)}]{Hor97}%
  \BibitemOpen
  \bibfield  {author} {\bibinfo {author} {\bibfnamefont {P.}~\bibnamefont
  {Horodecki}},\ }\bibfield  {title} {\enquote {\bibinfo {title} {Separability
  criterion and inseparable mixed states with positive partial
  transposition},}\ }\href@noop {} {\bibfield  {journal} {\bibinfo  {journal}
  {Physics Letters A}\ }\textbf {\bibinfo {volume} {232}},\ \bibinfo {pages}
  {333--339} (\bibinfo {year} {1997})}\BibitemShut {NoStop}%
\bibitem [{\citenamefont {Bennett}\ \emph {et~al.}(1999)\citenamefont
  {Bennett}, \citenamefont {DiVincenzo}, \citenamefont {Mor}, \citenamefont
  {Shor}, \citenamefont {Smolin},\ and\ \citenamefont {Terhal}}]{BDMSST99}%
  \BibitemOpen
  \bibfield  {author} {\bibinfo {author} {\bibfnamefont {C.~H.}\ \bibnamefont
  {Bennett}}, \bibinfo {author} {\bibfnamefont {D.~P.}\ \bibnamefont
  {DiVincenzo}}, \bibinfo {author} {\bibfnamefont {T.}~\bibnamefont {Mor}},
  \bibinfo {author} {\bibfnamefont {P.~W.}\ \bibnamefont {Shor}}, \bibinfo
  {author} {\bibfnamefont {J.~A.}\ \bibnamefont {Smolin}}, \ and\ \bibinfo
  {author} {\bibfnamefont {B.~M.}\ \bibnamefont {Terhal}},\ }\bibfield  {title}
  {\enquote {\bibinfo {title} {Unextendible product bases and bound
  entanglement},}\ }\href@noop {} {\bibfield  {journal} {\bibinfo  {journal}
  {Physical Review Letters}\ }\textbf {\bibinfo {volume} {82}},\ \bibinfo
  {pages} {5385--5388} (\bibinfo {year} {1999})}\BibitemShut {NoStop}%
\bibitem [{\citenamefont {Augusiak1}\ \emph {et~al.}(2011)\citenamefont
  {Augusiak1}, \citenamefont {Tura},\ and\ \citenamefont {Lewenstein}}]{ATL11}%
  \BibitemOpen
  \bibfield  {author} {\bibinfo {author} {\bibfnamefont {R.}~\bibnamefont
  {Augusiak1}}, \bibinfo {author} {\bibfnamefont {J.}~\bibnamefont {Tura}}, \
  and\ \bibinfo {author} {\bibfnamefont {M.}~\bibnamefont {Lewenstein}},\
  }\bibfield  {title} {\enquote {\bibinfo {title} {A note on the optimality of
  decomposable entanglement witnesses and completely entangled subspaces},}\
  }\href@noop {} {\bibfield  {journal} {\bibinfo  {journal} {Journal of Physics
  A: Mathematical and Theoretical}\ }\textbf {\bibinfo {volume} {44}},\
  \bibinfo {pages} {212001} (\bibinfo {year} {2011})}\BibitemShut {NoStop}%
\bibitem [{\citenamefont {Chru\'{s}ci\'{n}ski}\ and\ \citenamefont
  {Sarbicki}(2014)}]{CS14}%
  \BibitemOpen
  \bibfield  {author} {\bibinfo {author} {\bibfnamefont {D.}~\bibnamefont
  {Chru\'{s}ci\'{n}ski}}\ and\ \bibinfo {author} {\bibfnamefont
  {G.}~\bibnamefont {Sarbicki}},\ }\bibfield  {title} {\enquote {\bibinfo
  {title} {Entanglement witnesses: construction, analysis and
  classification},}\ }\href@noop {} {\bibfield  {journal} {\bibinfo  {journal}
  {Journal of Physics A: Mathematical and Theoretical}\ }\textbf {\bibinfo
  {volume} {47}},\ \bibinfo {pages} {483001} (\bibinfo {year}
  {2014})}\BibitemShut {NoStop}%
\bibitem [{\citenamefont {Gour}\ and\ \citenamefont {Wallach}(2007)}]{GW07}%
  \BibitemOpen
  \bibfield  {author} {\bibinfo {author} {\bibfnamefont {G.}~\bibnamefont
  {Gour}}\ and\ \bibinfo {author} {\bibfnamefont {N.~R.}\ \bibnamefont
  {Wallach}},\ }\bibfield  {title} {\enquote {\bibinfo {title} {Entanglement of
  subspaces and error-correcting codes},}\ }\href@noop {} {\bibfield  {journal}
  {\bibinfo  {journal} {Physical Review A}\ }\textbf {\bibinfo {volume} {76}},\
  \bibinfo {pages} {042309} (\bibinfo {year} {2007})}\BibitemShut {NoStop}%
\bibitem [{\citenamefont {Huber}\ and\ \citenamefont {Grassl}(2020)}]{HG20}%
  \BibitemOpen
  \bibfield  {author} {\bibinfo {author} {\bibfnamefont {F.}~\bibnamefont
  {Huber}}\ and\ \bibinfo {author} {\bibfnamefont {M.}~\bibnamefont {Grassl}},\
  }\bibfield  {title} {\enquote {\bibinfo {title} {Quantum codes of maximal
  distance and highly entangled subspaces},}\ }\href@noop {} {\bibfield
  {journal} {\bibinfo  {journal} {Quantum}\ }\textbf {\bibinfo {volume} {4}},\
  \bibinfo {pages} {284} (\bibinfo {year} {2020})}\BibitemShut {NoStop}%
\bibitem [{\citenamefont {Harrow}\ and\ \citenamefont
  {Montanaro}(2010)}]{HM10}%
  \BibitemOpen
  \bibfield  {author} {\bibinfo {author} {\bibfnamefont {Aram~W.}\ \bibnamefont
  {Harrow}}\ and\ \bibinfo {author} {\bibfnamefont {Ashley}\ \bibnamefont
  {Montanaro}},\ }\bibfield  {title} {\enquote {\bibinfo {title} {An efficient
  test for product states with applications to quantum merlin-arthur games},}\
  }in\ \href {\doibase 10.1109/FOCS.2010.66} {\emph {\bibinfo {booktitle} {2010
  IEEE 51st Annual Symposium on Foundations of Computer Science}}}\ (\bibinfo
  {year} {2010})\ pp.\ \bibinfo {pages} {633--642}\BibitemShut {NoStop}%
\bibitem [{\citenamefont {Demianowicz}\ and\ \citenamefont
  {Augusiak}(2018)}]{MR18}%
  \BibitemOpen
  \bibfield  {author} {\bibinfo {author} {\bibfnamefont {M.}~\bibnamefont
  {Demianowicz}}\ and\ \bibinfo {author} {\bibfnamefont {R.}~\bibnamefont
  {Augusiak}},\ }\bibfield  {title} {\enquote {\bibinfo {title} {From
  unextendible product bases to genuinely entangled subspaces},}\ }\href@noop
  {} {\bibfield  {journal} {\bibinfo  {journal} {Physical Review A}\ }\textbf
  {\bibinfo {volume} {98}},\ \bibinfo {pages} {012313} (\bibinfo {year}
  {2018})}\BibitemShut {NoStop}%
\bibitem [{\citenamefont {Agrawal}\ \emph {et~al.}(2019)\citenamefont
  {Agrawal}, \citenamefont {Halder},\ and\ \citenamefont {Banik}}]{AHB19}%
  \BibitemOpen
  \bibfield  {author} {\bibinfo {author} {\bibfnamefont {S.}~\bibnamefont
  {Agrawal}}, \bibinfo {author} {\bibfnamefont {S.}~\bibnamefont {Halder}}, \
  and\ \bibinfo {author} {\bibfnamefont {M.}~\bibnamefont {Banik}},\ }\bibfield
   {title} {\enquote {\bibinfo {title} {Genuinely entangled subspace with
  all-encompassing distillable entanglement across every bipartition},}\
  }\href@noop {} {\bibfield  {journal} {\bibinfo  {journal} {Physical Review
  A}\ }\textbf {\bibinfo {volume} {99}},\ \bibinfo {pages} {032335} (\bibinfo
  {year} {2019})}\BibitemShut {NoStop}%
\bibitem [{\citenamefont {Walgate}\ and\ \citenamefont {Scott}(2008)}]{Wal08}%
  \BibitemOpen
  \bibfield  {author} {\bibinfo {author} {\bibfnamefont {J.}~\bibnamefont
  {Walgate}}\ and\ \bibinfo {author} {\bibfnamefont {A.~J.}\ \bibnamefont
  {Scott}},\ }\bibfield  {title} {\enquote {\bibinfo {title} {Generic local
  distinguishability and completely entangled subspaces},}\ }\href@noop {}
  {\bibfield  {journal} {\bibinfo  {journal} {Journal of Physics A:
  Mathematical and Theoretical}\ }\textbf {\bibinfo {volume} {41}},\ \bibinfo
  {pages} {375305} (\bibinfo {year} {2008})}\BibitemShut {NoStop}%
\bibitem [{\citenamefont {Lovitz}\ and\ \citenamefont {Johnston}(2022)}]{LJ21}%
  \BibitemOpen
  \bibfield  {author} {\bibinfo {author} {\bibfnamefont {B.}~\bibnamefont
  {Lovitz}}\ and\ \bibinfo {author} {\bibfnamefont {N.}~\bibnamefont
  {Johnston}},\ }\bibfield  {title} {\enquote {\bibinfo {title} {Entangled
  subspaces and generic local state discrimination with pre-shared
  entanglement},}\ }\href@noop {} {\bibfield  {journal} {\bibinfo  {journal}
  {Quantum}\ }\textbf {\bibinfo {volume} {6}},\ \bibinfo {pages} {760}
  (\bibinfo {year} {2022})}\BibitemShut {NoStop}%
\bibitem [{\citenamefont {Shenoy}\ and\ \citenamefont {Srikanth}(2019)}]{SS19}%
  \BibitemOpen
  \bibfield  {author} {\bibinfo {author} {\bibfnamefont {A.~H.}\ \bibnamefont
  {Shenoy}}\ and\ \bibinfo {author} {\bibfnamefont {R.}~\bibnamefont
  {Srikanth}},\ }\bibfield  {title} {\enquote {\bibinfo {title} {Maximally
  nonlocal subspaces},}\ }\href@noop {} {\bibfield  {journal} {\bibinfo
  {journal} {Journal of Physics A: Mathematical and Theoretical}\ }\textbf
  {\bibinfo {volume} {52}},\ \bibinfo {pages} {095302} (\bibinfo {year}
  {2019})}\BibitemShut {NoStop}%
\bibitem [{\citenamefont {Buss}\ \emph {et~al.}(1999)\citenamefont {Buss},
  \citenamefont {Frandsen},\ and\ \citenamefont {Shallit}}]{BUSS1999572}%
  \BibitemOpen
  \bibfield  {author} {\bibinfo {author} {\bibfnamefont {Jonathan~F}\
  \bibnamefont {Buss}}, \bibinfo {author} {\bibfnamefont {Gudmund~S}\
  \bibnamefont {Frandsen}}, \ and\ \bibinfo {author} {\bibfnamefont
  {Jeffrey~O}\ \bibnamefont {Shallit}},\ }\bibfield  {title} {\enquote
  {\bibinfo {title} {The computational complexity of some problems of linear
  algebra},}\ }\href {\doibase https://doi.org/10.1006/jcss.1998.1608}
  {\bibfield  {journal} {\bibinfo  {journal} {Journal of Computer and System
  Sciences}\ }\textbf {\bibinfo {volume} {58}},\ \bibinfo {pages} {572--596}
  (\bibinfo {year} {1999})}\BibitemShut {NoStop}%
\bibitem [{\citenamefont {Linden}\ \emph {et~al.}(2006)\citenamefont {Linden},
  \citenamefont {Popescu},\ and\ \citenamefont {Smolin}}]{LPS06}%
  \BibitemOpen
  \bibfield  {author} {\bibinfo {author} {\bibfnamefont {N.}~\bibnamefont
  {Linden}}, \bibinfo {author} {\bibfnamefont {S.}~\bibnamefont {Popescu}}, \
  and\ \bibinfo {author} {\bibfnamefont {J.~A.}\ \bibnamefont {Smolin}},\
  }\bibfield  {title} {\enquote {\bibinfo {title} {Entanglement of
  superpositions},}\ }\href@noop {} {\bibfield  {journal} {\bibinfo  {journal}
  {Physical Review Letters}\ }\textbf {\bibinfo {volume} {97}},\ \bibinfo
  {pages} {100502} (\bibinfo {year} {2006})}\BibitemShut {NoStop}%
\bibitem [{\citenamefont {Gour}\ and\ \citenamefont {Roy}(2008)}]{GR08}%
  \BibitemOpen
  \bibfield  {author} {\bibinfo {author} {\bibfnamefont {G.}~\bibnamefont
  {Gour}}\ and\ \bibinfo {author} {\bibfnamefont {A.}~\bibnamefont {Roy}},\
  }\bibfield  {title} {\enquote {\bibinfo {title} {Entanglement of subspaces in
  terms of entanglement of superpositions},}\ }\href@noop {} {\bibfield
  {journal} {\bibinfo  {journal} {Physical Review A}\ }\textbf {\bibinfo
  {volume} {77}},\ \bibinfo {pages} {012336} (\bibinfo {year}
  {2008})}\BibitemShut {NoStop}%
\bibitem [{\citenamefont {Demianowicz}\ \emph {et~al.}(2021)\citenamefont
  {Demianowicz}, \citenamefont {Rajchel-Mieldzio\'{c}},\ and\ \citenamefont
  {Augusiak}}]{DRA21}%
  \BibitemOpen
  \bibfield  {author} {\bibinfo {author} {\bibfnamefont {M.}~\bibnamefont
  {Demianowicz}}, \bibinfo {author} {\bibfnamefont {G.}~\bibnamefont
  {Rajchel-Mieldzio\'{c}}}, \ and\ \bibinfo {author} {\bibfnamefont
  {R.}~\bibnamefont {Augusiak}},\ }\bibfield  {title} {\enquote {\bibinfo
  {title} {Simple sufficient condition for subspace to be completely or
  genuinely entangled},}\ }\href@noop {} {\bibfield  {journal} {\bibinfo
  {journal} {New Journal of Physics}\ }\textbf {\bibinfo {volume} {23}},\
  \bibinfo {pages} {103016} (\bibinfo {year} {2021})}\BibitemShut {NoStop}%
\bibitem [{\citenamefont {Doherty}\ \emph {et~al.}(2004)\citenamefont
  {Doherty}, \citenamefont {Parrilo},\ and\ \citenamefont
  {Spedalieri}}]{DPS04}%
  \BibitemOpen
  \bibfield  {author} {\bibinfo {author} {\bibfnamefont {A.~C.}\ \bibnamefont
  {Doherty}}, \bibinfo {author} {\bibfnamefont {P.~A.}\ \bibnamefont
  {Parrilo}}, \ and\ \bibinfo {author} {\bibfnamefont {F.~M.}\ \bibnamefont
  {Spedalieri}},\ }\bibfield  {title} {\enquote {\bibinfo {title} {A complete
  family of separability criteria},}\ }\href@noop {} {\bibfield  {journal}
  {\bibinfo  {journal} {Physical Review A}\ }\textbf {\bibinfo {volume} {69}},\
  \bibinfo {pages} {022308} (\bibinfo {year} {2004})}\BibitemShut {NoStop}%
\bibitem [{\citenamefont {Navascu\'{e}s}\ \emph {et~al.}(2009)\citenamefont
  {Navascu\'{e}s}, \citenamefont {Owari},\ and\ \citenamefont
  {Plenio}}]{NOP09}%
  \BibitemOpen
  \bibfield  {author} {\bibinfo {author} {\bibfnamefont {M.}~\bibnamefont
  {Navascu\'{e}s}}, \bibinfo {author} {\bibfnamefont {M.}~\bibnamefont
  {Owari}}, \ and\ \bibinfo {author} {\bibfnamefont {M.~B.}\ \bibnamefont
  {Plenio}},\ }\bibfield  {title} {\enquote {\bibinfo {title} {Complete
  criterion for separability detection},}\ }\href@noop {} {\bibfield  {journal}
  {\bibinfo  {journal} {Physical Review Letters}\ }\textbf {\bibinfo {volume}
  {103}},\ \bibinfo {pages} {160404} (\bibinfo {year} {2009})}\BibitemShut
  {NoStop}%
\bibitem [{\citenamefont {Harrow}\ \emph {et~al.}(2017)\citenamefont {Harrow},
  \citenamefont {Natarajan},\ and\ \citenamefont {Wu}}]{HNW17}%
  \BibitemOpen
  \bibfield  {author} {\bibinfo {author} {\bibfnamefont {A.~W.}\ \bibnamefont
  {Harrow}}, \bibinfo {author} {\bibfnamefont {A.}~\bibnamefont {Natarajan}}, \
  and\ \bibinfo {author} {\bibfnamefont {X.}~\bibnamefont {Wu}},\ }\bibfield
  {title} {\enquote {\bibinfo {title} {An improved semidefinite programming
  hierarchy for testing entanglement},}\ }\href@noop {} {\bibfield  {journal}
  {\bibinfo  {journal} {Communications in Mathematical Physics}\ }\textbf
  {\bibinfo {volume} {352}},\ \bibinfo {pages} {881--904} (\bibinfo {year}
  {2017})}\BibitemShut {NoStop}%
\bibitem [{\citenamefont {Harris}(2013)}]{harris2013algebraic}%
  \BibitemOpen
  \bibfield  {author} {\bibinfo {author} {\bibfnamefont {J.}~\bibnamefont
  {Harris}},\ }\href {https://books.google.ca/books?id=U-UlBQAAQBAJ} {\emph
  {\bibinfo {title} {Algebraic Geometry: A First Course}}},\ Graduate Texts in
  Mathematics\ (\bibinfo  {publisher} {Springer New York},\ \bibinfo {year}
  {2013})\BibitemShut {NoStop}%
\bibitem [{\citenamefont {Fawzi}(2021)}]{Faw21}%
  \BibitemOpen
  \bibfield  {author} {\bibinfo {author} {\bibfnamefont {H.}~\bibnamefont
  {Fawzi}},\ }\bibfield  {title} {\enquote {\bibinfo {title} {The set of
  separable states has no finite semidefinite representation except in
  dimension $3 \times 2$},}\ }\href@noop {} {\bibfield  {journal} {\bibinfo
  {journal} {Communications in Mathematical Physics}\ }\textbf {\bibinfo
  {volume} {386}},\ \bibinfo {pages} {1319--1335} (\bibinfo {year}
  {2021})}\BibitemShut {NoStop}%
\bibitem [{\citenamefont {Cubitt}\ \emph {et~al.}(2008)\citenamefont {Cubitt},
  \citenamefont {Montanaro},\ and\ \citenamefont {Winter}}]{CMW08}%
  \BibitemOpen
  \bibfield  {author} {\bibinfo {author} {\bibfnamefont {T.~S.}\ \bibnamefont
  {Cubitt}}, \bibinfo {author} {\bibfnamefont {A.}~\bibnamefont {Montanaro}}, \
  and\ \bibinfo {author} {\bibfnamefont {A.}~\bibnamefont {Winter}},\
  }\bibfield  {title} {\enquote {\bibinfo {title} {On the dimension of
  subspaces with bounded {S}chmidt rank},}\ }\href@noop {} {\bibfield
  {journal} {\bibinfo  {journal} {Journal of Mathematical Physics}\ }\textbf
  {\bibinfo {volume} {49}},\ \bibinfo {pages} {022107} (\bibinfo {year}
  {2008})}\BibitemShut {NoStop}%
\bibitem [{\citenamefont {Peres}(1996)}]{Per96}%
  \BibitemOpen
  \bibfield  {author} {\bibinfo {author} {\bibfnamefont {A.}~\bibnamefont
  {Peres}},\ }\bibfield  {title} {\enquote {\bibinfo {title} {Separability
  criterion for density matrices},}\ }\href@noop {} {\bibfield  {journal}
  {\bibinfo  {journal} {Physical Review Letters}\ }\textbf {\bibinfo {volume}
  {77}},\ \bibinfo {pages} {1413--1415} (\bibinfo {year} {1996})}\BibitemShut
  {NoStop}%
\bibitem [{\citenamefont {Horodecki}\ \emph {et~al.}(1998)\citenamefont
  {Horodecki}, \citenamefont {Horodecki},\ and\ \citenamefont
  {Horodecki}}]{HHH98}%
  \BibitemOpen
  \bibfield  {author} {\bibinfo {author} {\bibfnamefont {M.}~\bibnamefont
  {Horodecki}}, \bibinfo {author} {\bibfnamefont {P.}~\bibnamefont
  {Horodecki}}, \ and\ \bibinfo {author} {\bibfnamefont {R.}~\bibnamefont
  {Horodecki}},\ }\bibfield  {title} {\enquote {\bibinfo {title} {Mixed-state
  entanglement and distillation: Is there a ``bound'' entanglement in
  nature?}}\ }\href@noop {} {\bibfield  {journal} {\bibinfo  {journal}
  {Physical Review Letters}\ }\textbf {\bibinfo {volume} {80}},\ \bibinfo
  {pages} {5239--5242} (\bibinfo {year} {1998})}\BibitemShut {NoStop}%
\bibitem [{Note1()}]{Note1}%
  \BibitemOpen
  \bibinfo {note} {See \protect \url
  {http://www.njohnston.ca/publications/entanglement-of-subspaces/} or the
  supplementary material of the arXiv version of this paper for MATLAB
  code\label {ref:suppmat}}\BibitemShut {NoStop}%
\bibitem [{\citenamefont {Landsberg}(2012)}]{landsberg2012tensors}%
  \BibitemOpen
  \bibfield  {author} {\bibinfo {author} {\bibfnamefont {Joseph~M.}\
  \bibnamefont {Landsberg}},\ }\href@noop {} {\emph {\bibinfo {title} {Tensors:
  {G}eometry and {A}pplications}}},\ Graduate studies in mathematics\ (\bibinfo
   {publisher} {American Mathematical Society},\ \bibinfo {year}
  {2012})\BibitemShut {NoStop}%
\bibitem [{\citenamefont {Cardoso}(1991)}]{cardoso1991super}%
  \BibitemOpen
  \bibfield  {author} {\bibinfo {author} {\bibfnamefont {Jean-Fran{\c{c}}ois}\
  \bibnamefont {Cardoso}},\ }\bibfield  {title} {\enquote {\bibinfo {title}
  {Super-symmetric decomposition of the fourth-order cumulant tensor. blind
  identification of more sources than sensors.}}\ }in\ \href@noop {} {\emph
  {\bibinfo {booktitle} {ICASSP}}},\ Vol.~\bibinfo {volume} {91}\ (\bibinfo
  {organization} {Citeseer},\ \bibinfo {year} {1991})\ pp.\ \bibinfo {pages}
  {3109--3112}\BibitemShut {NoStop}%
\bibitem [{\citenamefont {T\'oth}\ \emph {et~al.}(2015)\citenamefont {T\'oth},
  \citenamefont {Moroder},\ and\ \citenamefont {G\"uhne}}]{TMG15}%
  \BibitemOpen
  \bibfield  {author} {\bibinfo {author} {\bibfnamefont {G.}~\bibnamefont
  {T\'oth}}, \bibinfo {author} {\bibfnamefont {T.}~\bibnamefont {Moroder}}, \
  and\ \bibinfo {author} {\bibfnamefont {O.}~\bibnamefont {G\"uhne}},\
  }\bibfield  {title} {\enquote {\bibinfo {title} {Evaluating convex roof
  entanglement measures},}\ }\href@noop {} {\bibfield  {journal} {\bibinfo
  {journal} {Physical Review Letters}\ }\textbf {\bibinfo {volume} {114}},\
  \bibinfo {pages} {160501} (\bibinfo {year} {2015})}\BibitemShut {NoStop}%
\bibitem [{Note2()}]{Note2}%
  \BibitemOpen
  \bibinfo {note} {All of the genericity statements made here also hold more
  generally under the algebraic-geometric definition of generic presented
  in~\cite {LJ21}}\BibitemShut {NoStop}%
\bibitem [{\citenamefont {Barak}\ \emph {et~al.}(2017)\citenamefont {Barak},
  \citenamefont {Kothari},\ and\ \citenamefont {Steurer}}]{barak2017quantum}%
  \BibitemOpen
  \bibfield  {author} {\bibinfo {author} {\bibfnamefont {Boaz}\ \bibnamefont
  {Barak}}, \bibinfo {author} {\bibfnamefont {Pravesh~K}\ \bibnamefont
  {Kothari}}, \ and\ \bibinfo {author} {\bibfnamefont {David}\ \bibnamefont
  {Steurer}},\ }\bibfield  {title} {\enquote {\bibinfo {title} {Quantum
  entanglement, sum of squares, and the log rank conjecture},}\ }in\ \href@noop
  {} {\emph {\bibinfo {booktitle} {Proceedings of the 49th Annual ACM SIGACT
  Symposium on Theory of Computing}}}\ (\bibinfo {year} {2017})\ pp.\ \bibinfo
  {pages} {975--988}\BibitemShut {NoStop}%
\bibitem [{\citenamefont {Terhal}\ and\ \citenamefont
  {Horodecki}(2000)}]{TH00}%
  \BibitemOpen
  \bibfield  {author} {\bibinfo {author} {\bibfnamefont {B.~M.}\ \bibnamefont
  {Terhal}}\ and\ \bibinfo {author} {\bibfnamefont {P.}~\bibnamefont
  {Horodecki}},\ }\bibfield  {title} {\enquote {\bibinfo {title} {Schmidt
  number for density matrices},}\ }\href@noop {} {\bibfield  {journal}
  {\bibinfo  {journal} {Physical Review A}\ }\textbf {\bibinfo {volume} {61}},\
  \bibinfo {pages} {040301(R)} (\bibinfo {year} {2000})}\BibitemShut {NoStop}%
\bibitem [{\citenamefont {Werner}(1989)}]{Wer89}%
  \BibitemOpen
  \bibfield  {author} {\bibinfo {author} {\bibfnamefont {R.~F.}\ \bibnamefont
  {Werner}},\ }\bibfield  {title} {\enquote {\bibinfo {title} {Quantum states
  with {E}instein-{P}odolsky-{R}osen correlations admitting a hidden-variable
  model},}\ }\href@noop {} {\bibfield  {journal} {\bibinfo  {journal} {Physical
  Review A}\ }\textbf {\bibinfo {volume} {40}},\ \bibinfo {pages} {4277--4281}
  (\bibinfo {year} {1989})}\BibitemShut {NoStop}%
\bibitem [{\citenamefont {Gharibian}(2010)}]{Gha10}%
  \BibitemOpen
  \bibfield  {author} {\bibinfo {author} {\bibfnamefont {S.}~\bibnamefont
  {Gharibian}},\ }\bibfield  {title} {\enquote {\bibinfo {title} {Strong
  {NP}-hardness of the quantum separability problem},}\ }\href@noop {}
  {\bibfield  {journal} {\bibinfo  {journal} {Quantum Information and
  Computation}\ }\textbf {\bibinfo {volume} {10}},\ \bibinfo {pages} {343--360}
  (\bibinfo {year} {2010})}\BibitemShut {NoStop}%
\bibitem [{\citenamefont {Gurvits}(2003)}]{Gur03}%
  \BibitemOpen
  \bibfield  {author} {\bibinfo {author} {\bibfnamefont {L.}~\bibnamefont
  {Gurvits}},\ }\bibfield  {title} {\enquote {\bibinfo {title} {Classical
  deterministic complexity of {E}dmonds' problem and quantum entanglement},}\
  }in\ \href@noop {} {\emph {\bibinfo {booktitle} {Proceedings of the
  Thirty-Fifth Annual ACM Symposium on Theory of Computing}}}\ (\bibinfo {year}
  {2003})\ pp.\ \bibinfo {pages} {10--19}\BibitemShut {NoStop}%
\bibitem [{\citenamefont {DiVincenzo}\ \emph {et~al.}(2003)\citenamefont
  {DiVincenzo}, \citenamefont {Mor}, \citenamefont {Shor}, \citenamefont
  {Smolin},\ and\ \citenamefont {Terhal}}]{DMSST03}%
  \BibitemOpen
  \bibfield  {author} {\bibinfo {author} {\bibfnamefont {D.~P.}\ \bibnamefont
  {DiVincenzo}}, \bibinfo {author} {\bibfnamefont {T.}~\bibnamefont {Mor}},
  \bibinfo {author} {\bibfnamefont {P.~W.}\ \bibnamefont {Shor}}, \bibinfo
  {author} {\bibfnamefont {J.~A.}\ \bibnamefont {Smolin}}, \ and\ \bibinfo
  {author} {\bibfnamefont {B.~M.}\ \bibnamefont {Terhal}},\ }\bibfield  {title}
  {\enquote {\bibinfo {title} {Unextendible product bases, uncompletable
  product bases and bound entanglement},}\ }\href@noop {} {\bibfield  {journal}
  {\bibinfo  {journal} {Communications in Mathematical Physics}\ }\textbf
  {\bibinfo {volume} {238}},\ \bibinfo {pages} {379--410} (\bibinfo {year}
  {2003})}\BibitemShut {NoStop}%
\bibitem [{\citenamefont {Koll{\'a}r}(1988)}]{kollar1988sharp}%
  \BibitemOpen
  \bibfield  {author} {\bibinfo {author} {\bibfnamefont {J.}~\bibnamefont
  {Koll{\'a}r}},\ }\bibfield  {title} {\enquote {\bibinfo {title} {Sharp
  effective nullstellensatz},}\ }\href@noop {} {\bibfield  {journal} {\bibinfo
  {journal} {Journal of the American Mathematical Society}\ ,\ \bibinfo {pages}
  {963--975}} (\bibinfo {year} {1988})}\BibitemShut {NoStop}%
\end{thebibliography}%

\begin{appendices}
\section{Appendix: Proof of Theorem~\ref{thm:detector_levelk} and Proposition~\ref{prop:generic}}\label{app:proofs}

We now prove our main result---Theorem~\ref{thm:detector_levelk}, which Theorem~\ref{thm:detector_level1} occurs as a special case of. We require the following result, which essentially amounts to a translation of Hilbert's projective Nullstellensatz.

\begin{theorem}\label{thm:nullstellensatz}
    Let $r$ be a positive integer and let $\Psi_r^1 : \H_X^{\otimes (r+1)} \rightarrow \H_Y$ be a linear map that is invariant under all permutations of the $r+1$ copies of $\H_X$, i.e., $\Psi_r^1 P^{\vee}_{X,r+1}=\Psi_r^1$. Then the following statements are equivalent:
    \begin{enumerate}
        \item $\Psi_r^1(\ket{x}^{\otimes {(r+1)}}) \neq 0$ for all pure states $\ket{x} \in \H_X$.
        \item There exists a positive integer $1 \leq k \leq (\max\{r,2\}+1)^{d_X}-r$ for which $\range(P^{\vee}_{X,r+k})\cap \ker(\Psi_r^k)=\{0\}$,
        where $\Psi_r^k:= (\Psi_r^1 \otimes I_{X,k-1})P^{\vee}_{X,r+k}$.
    \end{enumerate}
\end{theorem}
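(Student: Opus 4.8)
The plan is to read Theorem~\ref{thm:nullstellensatz} as a dualized, effective form of Hilbert's projective Nullstellensatz, translating every object into the polynomial ring $R:=\mathrm{Sym}^{\bullet}(\H_X)$ in $d_X$ variables. Since $\Psi_r^1 P^{\vee}_{X,r+1}=\Psi_r^1$, the map $\Psi_r^1$ annihilates the orthogonal complement of the symmetric subspace and is therefore determined by its restriction $\bar\Psi:\range(P^{\vee}_{X,r+1})=\mathrm{Sym}^{r+1}(\H_X)\to\H_Y$; let $M:=\im(\bar\Psi^{\t})\subseteq\mathrm{Sym}^{r+1}(\H_X)$ be the associated space of degree-$(r+1)$ forms and let $J\subseteq R$ be the homogeneous ideal it generates. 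The first step is the dictionary for statement~1: the vectors $\ket{x}^{\otimes(r+1)}$, $\ket{x}\in\H_X$, span $\mathrm{Sym}^{r+1}(\H_X)$, so $\Psi_r^1(\ket{x}^{\otimes(r+1)})=0$ iff $\langle m,\ket{x}^{\otimes(r+1)}\rangle=0$ for all $m\in M$, i.e.\ iff every form in $M$ vanishes at $\ket{x}$; since the forms are homogeneous one may pass freely between unit and arbitrary nonzero vectors, so statement~1 says exactly that the forms in $M$ have only the origin as a common zero, that is, $V(J)=\{0\}$.

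The second, more technical, step identifies statement~2 at level $k$ with $\mathfrak{m}^{r+k}\subseteq J$, where $\mathfrak{m}$ is the irrelevant maximal ideal of $R$. Here I would use that a fully symmetric tensor is in particular symmetric in its first $r+1$ tensor legs, so $\range(P^{\vee}_{X,r+k})=\mathrm{Sym}^{r+k}(\H_X)$ sits inside $\mathrm{Sym}^{r+1}(\H_X)\otimes\H_X^{\otimes(k-1)}$, and on this subspace $\Psi_r^1\otimes I_{X,k-1}$ coincides with $\bar\Psi\otimes I_{X,k-1}$. Hence $\range(P^{\vee}_{X,r+k})\cap\ker(\Psi_r^k)=\{0\}$ is precisely injectivity of $(\bar\Psi\otimes I_{X,k-1})|_{\mathrm{Sym}^{r+k}(\H_X)}$, which by dualizing is surjectivity of the adjoint map; one computes its image to be the product $M\cdot\mathrm{Sym}^{k-1}(\H_X)=J_{r+k}$ inside the symmetric algebra, so the condition is $J_{r+k}=\mathrm{Sym}^{r+k}(\H_X)=R_{r+k}$, equivalently $\mathfrak{m}^{r+k}\subseteq J$. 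With both dictionaries in place, statement~2 for \emph{some} admissible $k$ reads ``$\mathfrak{m}^{r+k}\subseteq J$ for some $k\geq1$''; because $\mathfrak{m}^{N'}\subseteq\mathfrak{m}^{N}$ for $N'\geq N$ this is the same as ``$\mathfrak{m}^{N}\subseteq J$ for some $N$'', and the projective Nullstellensatz makes it equivalent to $V(J)=\{0\}$, matching statement~1. (The direction $2\Rightarrow1$ uses only the trivial inclusion $V(J)\subseteq V(\mathfrak{m}^{N})=\{0\}$ and needs no bound.)

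It remains to supply the effective bound: if $V(J)=\{0\}$ then $\mathfrak{m}^{N}\subseteq J$ for some $N\leq(\max\{r,2\}+1)^{d_X}$. I would obtain this by the complete-intersection reduction. When $V(J)=\{0\}$, $J$ is $\mathfrak{m}$-primary and generated in degree $r+1$, so generic $\complex$-linear combinations $g_1,\dots,g_{d_X}$ of a spanning set of $J_{r+1}=M$ form a homogeneous system of parameters, hence---as $R$ is Cohen--Macaulay---a regular sequence; the Koszul computation of the Hilbert series of $R/(g_1,\dots,g_{d_X})$ shows its top nonzero degree is $d_X r$, so $\mathfrak{m}^{d_X r+1}\subseteq(g_1,\dots,g_{d_X})\subseteq J$, giving the admissible level $k=(d_X-1)r+1$ after checking $d_X r+1\leq(\max\{r,2\}+1)^{d_X}$ (trivial for $r=1$; for $r\geq2$ it follows from $(r+1)^{d_X}\geq 1+d_Xr$). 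Alternatively one can cite an off-the-shelf effective projective Nullstellensatz, the $\max\{r,2\}+1=\max\{r+1,3\}$ being the familiar artifact of degree-two generators. I expect this last step to be the main obstacle: the rest is the ``translation'' advertised in the text, whereas pinning down a subspace-independent degree bound requires genuine commutative-algebra input and a little care with the $r=1$ (degree-two) edge case---even though the resulting bound is nowhere near tight.
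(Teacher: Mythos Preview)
Your proposal is correct and follows the same overall route as the paper: both translate statement~1 into the assertion that the degree-$(r+1)$ forms coming from $\Psi_r^1$ (the paper's coordinate functions $p_1,\dots,p_{d_Y}$, equivalently your $M=\im(\bar\Psi^{\t})$) have no common projective zero, translate statement~2 at level $k$ into $J_{r+k}=R_{r+k}$, and then invoke the projective Nullstellensatz. The only substantive difference is how the effective bound is obtained. The paper simply cites Koll\'{a}r's theorem, which is precisely your ``alternative''. Your primary argument---extracting $d_X$ generic forms from $M$, noting they form a regular sequence because $R$ is Cohen--Macaulay and $V(J)=\{0\}$, and reading off the top degree $d_Xr$ from the Koszul Hilbert series $(1+t+\cdots+t^r)^{d_X}$---actually yields the much sharper bound $k\le(d_X-1)r+1$, which comfortably implies the theorem's exponential bound and avoids the $\max\{r,2\}$ artifact entirely. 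So your translation is the paper's, and your bound is strictly better.
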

\begin{proof}
    For $2 \Rightarrow 1$, if $\mathrm{range}(P^{\vee}_{X,r+k})\cap\ker(\Psi_r^k)=\{0\},$ then for all pure states $\ket{x}\in \H_X$ it holds that 
    \[
        0\neq \Psi_r^k(\ket{x}^{\otimes (r+k)})=\Psi_r^1(\ket{x}^{\otimes (r+1)})\otimes \ket{x}^{\otimes (k-1)},
    \]
    so $\Psi_r^1(\ket{x}^{\otimes (r+1)})\neq 0$. The converse $1 \Rightarrow 2$ is more difficult, and is obtained by translating Statement 1 to a statement about zeroes of homogeneous polynomials, invoking Hilbert's projective Nullstellensatz, and then translating back.
    
    In more details, first observe that the coordinates of $\Psi_r^1(\ket{x}^{\otimes (r+1)})$ as $\ket{x}$ ranges over the unit vectors in $\H_X$ can be written as homogeneous $d_X$-variate polynomials $p_1,\dots, p_{d_Y}$ in $\ket{x}$ of degree $r+1$, so Statement 1 is equivalent to there being no unit vector $\ket{x}$ (or equivalently, by scaling, no non-zero vector $\ket{x}$) for which $p_1(\ket{x})=\dots=p_{d_Y}(\ket{x})=0$. By Hilbert's projective Nullstellensatz and a degree bound due to Koll\'{a}r, this is equivalent to the existence of a positive integer $1 \leq k \leq (\max\{r,2\}+1)^{d_X}-r$ for which every degree $r+k$ monomial $x_{j_1}\cdots x_{j_{r+k}}$ can be written as a linear combination of the polynomials $q_{i_1,\dots, i_{k-1},j} \defeq x_{i_1}\cdots x_{i_{k-1}} p_{\ell}$, where $i_1,\dots, i_{k-1}$ range from $1$ to $d_X$, and $\ell$ ranges from $1$ to $d_Y$~\cite{kollar1988sharp,harris2013algebraic}.
    
    As with $\Psi_r^1(\ket{x}^{\otimes (r+1)})$, the coordinates of $\Psi_r^k(\ket{x}^{\otimes (r+k)})$ as $\ket{x}$ ranges over the unit vectors in $\H_X$ can be written as homogeneous $d_X$-variate polynomials of degree $r+k$. Direct calculation shows that these polynomials are precisely $q_{i_1,\dots, i_{k-1},j}$ (the identity map $I_{X,k-1}$ that appears in the definition of $\Psi_r^k$ produces the monomials $x_{i_1}\cdots x_{i_{k-1}}$). Since every monomial $x_{j_1}\cdots x_{j_{r+k}}$ can be written as a linear combination of the polynomials $q_{i_1,\dots, i_{k-1},j}$, there exists a linear map $\Xi : \H_Y \otimes \H_X^{\otimes (k-1)}\rightarrow \H_X^{\otimes (r+k)}$ for which $\Xi \circ \Psi_r^k(\ket{x}^{\otimes (r+k)})=\ket{x}^{\otimes (r+k)}$ for all $\ket{x}\in \H_X$. It follows that
    \[
        \ker(\Psi_r^k) \cap\spn\big\{\ket{x}^{\otimes (r+k)} : \ket{x} \in \H_X \big\}=\{0\}.
    \]
    This completes the proof, since $\spn\{\ket{x}^{\otimes (r+k)} : \ket{x} \in \H_X\} = \mathrm{range}(P^{\vee}_{X,r+k})$.
\end{proof}

In the following proof of Theorem~\ref{thm:detector_levelk}, we make use of Theorem~\ref{thm:nullstellensatz} in the special case where $\H_X = \H_A \otimes \H_B$ and $\H_Y = (\H_A \otimes \H_B)^{\otimes (r+1)}\oplus (\H_A \otimes \H_B)^{\otimes (r+1)}$.

\begin{proof}[Proof of Theorem~\ref{thm:detector_levelk}]
    Let $P_{\S}^{\perp}$ be the projection onto the orthogonal complement of $\S$. Then $\S$ is $r$-entangled if and only if there does not exist $\ket{x}\in \H_{A}\otimes \H_{B}$ for which $\Psi_r^1(\ket{x}^{\otimes (r+1)})=0$, where we define
    \[
        \Psi_r^1 : (\H_A \otimes \H_B)^{\otimes (r+1)} \rightarrow (\H_A \otimes \H_B)^{\otimes (r+1)}\oplus (\H_A \otimes \H_B)^{\otimes (r+1)}
    \]
    by
    \[
        \Psi_r^1\defeq\begin{bmatrix}\Phi_r^1 \\ (P_{\S}^{\perp} \otimes I_{AB, r})P^{\vee}_{AB,r+1}\end{bmatrix}.
    \]
    Indeed, by Proposition~\ref{prop:phi_rk_rank_det}, $\S$ is $r$-entangled if and only if for every $\ket{x} \in \H_A \otimes \H_B$ for which $P_{\S}^{\perp}\ket{x}= 0$, it holds that $\Phi_r^1(\ket{x}^{\otimes (r+1)})\neq 0$, which is easily seen to be equivalent to the above statement about $\Psi_r^1$.
    
    By Theorem~\ref{thm:nullstellensatz}, this is in turn equivalent to the existence of a positive integer $1 \leq k \leq (\max\{r,2\}+1)^{d_A d_B}-r$ for which $\mathrm{range}(P^{\vee}_{AB,r+k})\cap\ker(\Psi_r^k)=\{0\}$, where
    \[
        \Psi_r^k\defeq (\Psi_r^1 \otimes I_{AB,k-1})P^{\vee}_{AB,r+k}=\begin{bmatrix} \Phi_r^k \\ (P_{\S}^\perp \otimes I_{AB,r+k-1})P^{\vee}_{AB,r+k} \end{bmatrix}.
    \]
    Now, $\mathrm{range}(P^{\vee}_{AB,r+k})\cap\ker(\Psi_r^k)=\{0\}$ if and only if $\mathrm{range}(P^{\vee}_{AB,r+k}) \cap \mathrm{range} (P_{\S} \otimes I_{AB,r+k-1})\cap \ker(\Phi_r^k) = \{0\}$, where $P_{\S}$ denotes the projection onto $\S$. Observe that
    \begin{widetext}\begin{align*}
        \range\big(P^{\vee}_{AB,r+k}\big) \cap \range\big(P_{\S} \otimes I_{AB,r+k-1}\big) & = \range\big(P^{\vee}_{AB,r+k}\big) \cap \range\big(P_{\S}^{\otimes (r+k)}\big) \\
        & = \range\big(P^{\vee}_{AB,r+k}P_{\S}^{\otimes (r+k)}\big) \\
        & = \spn\big\{P^{\vee}_{AB,r+k}(\ket{x_{j_1}} \otimes \dots \otimes \ket{x_{j_{r+k}}}): 1\leq j_1,\dots, j_{r+k} \leq d_S\big\}\\
        & = \spn\big\{P^{\vee}_{AB,r+k}(\ket{x_{j_1}} \otimes \dots \otimes \ket{x_{j_{r+k}}}): 1\leq j_1\leq \dots \leq j_{r+k} \leq d_S\big\},
    \end{align*}\end{widetext}
    where the first line follows from permutation invariance, the second follows from the fact that the projections $P^{\vee}_{AB,r+k}$ and $P_{\S}^{\otimes (r+k)}$ commute, the third is clear, and the fourth follows from the fact that
    \[
        P^{\vee}_{AB,r+k}(\ket{x_{j_1}} \otimes \dots \otimes \ket{x_{j_{r+k}}})=P^{\vee}_{AB,r+k}(\ket{x_{j_{\sigma(1)}}} \otimes \dots \otimes \ket{x_{j_{\sigma(r+k)}}})
    \]
    for every permutation $\sigma$ of $\{1,2,\ldots,r+k\}$ (i.e., permutation invariance again). By permutation invariance of $\Phi_r^k$, $\S$ is $r$-entangled if and only if
    \begin{align*}
        \spn\big\{\ket{x_{j_1}} \otimes \dots \otimes \ket{x_{j_{r+k}}}: 1\leq j_1\leq \dots \leq j_{r+k} \leq d_S\big\} \cap \ker(\Phi_r^k) & \\
        = \{0\}, &
    \end{align*}
    i.e., the set in Equation~\eqref{eq:li_levelk} is linearly independent.
    
    For the statement beginning with ``Furthermore...," observe that linear independence of the set in Equation~\eqref{eq:li_levelk} is equivalent to the non-vanishing of some $\binom{d_S+r+k-1}{d_S-1}\times \binom{d_S+r+k-1}{d_S-1}$-minor of the matrix formed by taking the vectors in the set as columns. Since this determinant is a polynomial in the entries of $\ket{x_1},\dots, \ket{x_{d_S}},$ and any polynomial that is not identically zero vanishes on a set of Haar measure zero, this completes the proof.
\end{proof}

\newcommand{\calH}{\mathcal{H}}
\newcommand{\calS}{\mathcal{S}}
\newcommand{\calU}{\mathcal{U}}

\begin{proof}[Proof of Proposition~\ref{prop:generic}]

A generic subspace $\calS \subset \calH_A \otimes \calH_B$ of dimension $d_S$ can be chosen by picking $d_S$ generic vectors $\ket{x_1}, \dots, \ket{x_{d_S}} \in \calH_A \otimes \calH_B$ for the basis that spans $\calS$. Let $G=\{ P^{\vee}_{AB,2} (\ket{x_{j_1}} \otimes \ket{x_{j_{2}}}) : 1 \leq j_1 \leq j_{2} \leq d_S   \}$. We need to show that with probability $1$, 
\begin{equation}\label{eq:prop:1}
\ker(\Phi^1_1) \cap \spn(G)  = \{0\}.    
\end{equation}
We remark that the above condition is invariant under scaling of the vectors $\ket{x_1}, \dots, \ket{x_{d_S}}$. Hence, we will ignore the unit vector requirement for  $\ket{x_1}, \dots, \ket{x_{d_S}}$ (and all the vectors) for the purposes of this proof.

We now prove \eqref{eq:prop:1}. In the set $G$, the indices $j_1, j_2$ could be equal. We will partition the $\binom{d_S+1}{2}$ vectors in $G$ into subsets $G_{\text{eq}} = \{ P^{\vee}_{AB,2}(\ket{x_{j}} \otimes \ket{x_{j}}) : 1 \leq j \leq d_S   \}$ and $ G_{\text{neq}} = G \setminus G_{\text{eq}}$ has the terms with unequal indices. 
To establish \eqref{eq:prop:1}, it suffices to prove the following claim. 

\noindent {\bf Claim.}
\emph{With probability $1$ over the choice of $\ket{x_1}, \dots, \ket{x_{d_S}}$, we have for all $ 1 \le j_1 < j_2  \le d_S$, and all $ 1\le j \le d_S$,}  
\begin{align}
    P_{AB,2}^{\vee}(\ket{x_{j_1}}\otimes \ket{x_{j_2}}) &\notin \nonumber\\
    & \hspace{-1em} \text{span}\Big(\ker(\Phi^1_1) \cup G_{\text{neq}} \setminus \{P_{AB,2}^{\vee}(\ket{x_{j_1}} \otimes \ket{x_{j_2}})\} \Big), \label{eq:generic:helper0}
\end{align}
and
\begin{align}
    \ket{x_{j}}^{\otimes 2} &\notin \text{span}\Big(\ker(\Phi^1_1) \cup G_{\text{neq}} \cup G_{\text{eq}} \setminus \{\ket{x_{j}}^{\otimes 2}\} \Big). \label{eq:generic:helper1}
\end{align}  


To prove the claim, we first define the following subspaces. For each $i \in [d_S]$, let 
\begin{align*}
 \calU_i  &\coloneqq  \text{span}\Big\{ P^{\vee}_{AB,2}(\ket{x_i} \otimes \ket{z}) ~:~ \ket{z} \in \calH_A \otimes \calH_B \Big\}.
\end{align*}
Note that $\dim(\calU_i) \le d_A d_B$ and $\text{dim}(\ker(\Phi^1_1)) = (d_A d_B)^{2} - \binom{d_A}{2} \binom{d_B}{2}$. 


Consider $P^{\vee}_{AB,2}(\ket{x_{j_1}} \otimes \ket{x_{j_2}})$ for some $1 \le j_1 \le j_2 \le d_S$, and let $J^*=\{j_1\} \cup \{j_2\}$ representing the distinct indices involved.  
We observe that if $\calU_{-J^*} \coloneqq \bigcup_{i \in [d_S] \setminus J^*} \calU_i$, then 
\begin{align}
    \text{If } j_1 < j_2, ~~G_{\text{neq}} \setminus \{P^{\vee}_{AB,2}(\ket{x_{j_1}} \otimes \ket{x_{j_2}})\} &\subseteq \calU_{-J^*} ,\label{eq:generic:helper3}\\
    \text{else if} j_1 = j_2, ~~G_{\text{neq}} \cup G_{\text{eq}} \setminus \{\ket{x_{j_1}}^{\otimes 2} \} &\subseteq \calU_{-J^*} . \label{eq:generic:helper4}
\end{align} 
This is because when $j_1 < j_2$, every other vector in $G_{\text{neq}}$ involves at least one vector $\ket{x_j}$ with $j \in [d_S] \setminus J^*$. Hence \eqref{eq:generic:helper3} is true. Similarly when $j_1 = j_2$, we have \eqref{eq:generic:helper4} since every other vector in both $G_{\text{eq}}$ and $G_{\text{neq}}$ involves at least one vector $\ket{x_j}$ with $j \in [d_S] \setminus J^*$.  

The rest of the argument is the same for both \eqref{eq:generic:helper0} and \eqref{eq:generic:helper1}. 


Let $\V_{-J^*}:=\im(P^{\vee}_{AB,2}) \cap \text{ker}(\Phi^1_1) + \calU_{-J^*}.$ Then
\begin{align*}
    \dim(\V_{-J^*})&\leq \binom{d_A d_B + 1}{2} - \binom{d_A}{2}\cdot \binom{d_B}{2} + d_S (d_A d_B)\\
    &< \binom{d_A d_B + 1}{2},
\end{align*}
since $d_S \cdot (d_A d_B) <  \binom{d_A}{2} \binom{d_B}{2}$ by our assumption on $d_S$. It follows that $\V_{-J^*} \subsetneq \im(P^{\vee}_{AB,2})$. Hence $P^{\vee}_{AB,2}(\ket{x_{j_1}} \otimes \ket{x_{j_2}}) \notin \V_{-J^*}$ for a generic choice of $\ket{x_{j_1}},\ket{x_{j_2}}$ (note that $\V_{-J^*}$ does not depend on $\ket{x_{j_1}},\ket{x_{j_2}}$). This establishes both \eqref{eq:generic:helper0} and \eqref{eq:generic:helper1}, and completes the proof of Proposition~\ref{prop:generic}.
\end{proof}

\end{appendices}

\end{document}